\newcommand{\lag}{\tau}
\newcommand{\myfactside}{0.5}
\newcommand{\myspaceside}{-0.1cm}
\newcommand{\myspace}{-0.4cm}
\newcommand{\meanlambda}{\bar\lambda}
\newtheorem{notations}{Notations}
\newtheorem{cor}{Corollary}
\newtheorem{lem}{Lemma}
\newcommand{\otoprule}{\midrule[\heavyrulewidth]}
\begin{document}

\PACS{
      {02.50.Ey}{Stochastic processes} \and
      {05.45.Tp}{Time series analysis} \and
      {02.70.Rr}{General statistical methods} \and
      {89.65.Gh}{Economics; econophysics, financial markets, business and management}
}
% end of PACS codes

\title{Non-parametric kernel estimation for symmetric Hawkes processes. Application to high frequency financial data.}
\titlerunning{Non-parametric Hawkes kernel estimation}
\author{Emmanuel Bacry \inst{1} \and Khalil Dayri \inst{1} \and Jean-Francois Muzy \inst{1,2}
\thanks{This research is part of the Chair {\it Financial Risks} of the {\it Risk Foundation.}}
\thanks{The financial data used in this paper  have been  provided by the company {\em QuantHouse
EUROPE/ASIA}, http://www.quanthouse.com}.
}
\institute{CMAP UMR 7641 CNRS, Ecole Polytechnique, 91128 Palaiseau, France \and SPE UMR 6134 CNRS, Universit\'e de Corse, 20250 Corte, France}

\date{\today}

\abstract{
We define a numerical method that provides a non-parametric estimation of
the kernel shape in symmetric multivariate Hawkes processes. This method relies
on second order statistical properties of Hawkes processes that relate the
covariance matrix of the process to the kernel matrix. The square root
of the correlation function is computed using a minimal phase recovering method.
We illustrate our method on some examples and provide an empirical study
of the estimation errors. Within this framework, we analyze high frequency financial price
data modeled as 1D or 2D Hawkes processes. We find slowly decaying (power-law) kernel shapes suggesting
a long memory nature of self-excitation phenomena at the microstructure level of price dynamics.
}

\maketitle

\section{Introduction}
Although the concept of self excitement was commonly used for a long time by seismologists (see \cite{vere-jones_stochastic_1970} and the references therein), Alan Hawkes was the first to provide a well defined point process with a self exciting behavior \cite{hawkes_point_1971,hawkes_spectra_1971,liniger_multivariate_2009}.
This model was introduced to reproduce the ripple effects generated after the occurrence of an earthquake \cite{vere-jones_examples_1982,adamopoulos_cluster_1976}. Since then, it has been successfully used in many areas
ranging from seismology (see e.g., \cite{ogata_seismicity_1999}, for a recent review), biology \cite{reynaud-bouret_adaptive_2010}, to even criminology and terrorism \cite{mohler_self-exciting_2011,e._lewis_self-exciting_2011} (cf \cite{liniger_multivariate_2009} and references therein for a detailed review of Hawkes processes and their applications). As far as financial applications are concerned, since transactions and price changes are discrete events, Hawkes processes have naturally been generating more and more interest.
Applications can be found in the field of order arrival rate modeling \cite{hewlett_clustering_2006,bowsher_modelling_2007,Ioane2011Hawkes}, noise microstructure dynamics \cite{bacry_modeling_2011}, volatility clustering \cite{embrechts_multivariate_2011}, extreme values and VaR estimation \cite{chavez-demoulin_estimating_2005} and credit modeling \cite{giesecke_top_2007}.

Hawkes models account for a self exciting behavior of events
by which the arrival of one event increases the probability of occurrence of new ones. In its most basic form and in the one dimensional case, the Hawkes process is a counting process defined by $\lambda_t$, the rate of arrival of events by:
\begin{equation}\label{eq:HawkesBasicModelDef}
    \lambda_t=\mu+ \int_{-\infty}^t\phi_{t-s}dN_s
\end{equation}
where $\mu>0$ is a constant \emph{background} rate, $N_t$ is the cumulative counting process and $\phi$ a positive real function called \emph{decay kernel}.
We can clearly see in equation (\ref{eq:HawkesBasicModelDef}) that when some event occurs at time $t$, we have $dN_t=1$ and hence $d\lambda_t=\phi_0$. The influence of the event is transmitted to future times through $\phi$ such that at time $u>t$ the increase in $\lambda_t$ due to the time $t$ event is $\phi_{u-t}$. Thus a self exciting behavior is observed.

A basic issue in applications of Hawkes processes concerns their estimation. In early applications, parametric estimation used spectral analysis by means of the Bartlett spectrum (the Fourier transform of the autocovariance of the process) \cite{bartlett_spectral_1963,bartlett_spectral_1964} and it was indeed through that light that Hawkes presented his model. A maximum likelihood method for estimating the parameters of exponential, power law and Laguerre class of decay kernels was developed in \cite{ozaki_maximum_1979,ogata_linear_1982} and it became the standard method for estimating Hawkes processes. Furthermore, the Laguerre decay kernels were seen to be very pertinent decay kernels because they allowed to account for long term dependencies as well as offering short term flexibility. More recently, other types of estimation procedures were developed. When the form of the decay kernel is unknown, non-parametric methods are desirable because they give an idea of their general shape. For example, by using the branching property of the Hawkes process \cite{zhuang_stochastic_2002}, the authors in \cite{marsan_extending_2008} and \cite{lewis_nonparametric_2011} were able to provide Expectation-Maximization algorithms to estimate both background rate and the decay kernel.
%The problem with their approach is that by using the branching  it does not permit non positive kernels.
In \cite{reynaud-bouret_adaptive_2010}, the authors present also an algorithmic method for decay estimation by using a penalized projection method.
%\textcolor[rgb]{1.00,0.00,0.00}{(this is the article of Mark's friend. Frankly i did not understand it)}
%the important thing is that they do allow for non positive kernels.
%Non negative kernels give an inhibitory behavior to the point point process $N_t$ instead of a self exciting one and for that reason they are appealing to use. On the other hand they should be used in situations and over time ranges were the probability of reaching a negative total rate $\lambda_t$ is very small. We note that a theory for stationary Hawkes processes with positive kernels
%and inhibitory kernels was studied in \cite{bremaud_stability_1996} under the framework of nonlinear Hawkes processes.

In this paper, we propose an alternative simple non parametric estimation method for multivariate symmetric Hawkes processes based on the Bartlett spectrum. We present the method and its numerical feasibility without going too much in details about convergence speeds or error optimization. By studying one dimensional and 2-dimensional examples, we show that our approach provides reliable estimates on both fast and slowly decaying kernels. We then apply our method to high frequency financial data and find power-law kernels. This implies that the arrival of events display long range correlations, a phenomenon well known in finance, similarly to what was suggested in \cite{mikosch_modelling_2009}.

This paper is organized as follows. In section \ref{Sec:MultivariateHawkesProcesses}, we introduce a basic version of a multivariate Hawkes processes. We place ourselves in the context of an $n-$dimentional linear Hawkes process with a constant background rate and a nonnegative decay kernel. We set some notations and give out a martingale representation of the rate function $\lambda_t$ in Eq. (\ref{eq:HawkesBasicModelDef}). In section \ref{Sec:Cov}, we study the autocovariance of the process. Along the same line as \cite{hawkes_spectra_1971}, we establish its relationship with the decay kernel in both direct and Fourier spaces. The estimation method in the case of symmetric Hawkes processes is then provided in section \ref{sec:Estimation}. This method, based on a Hilbert transform phase recuperation method, is explicitly detailed in both univariate and special symmetric bivariate cases. In section \ref{Sec:Num}, the method is illustrated by numerical examples for both exponential and power law kernels. We also address some statistical issues concerning the estimation errors. Finally, in section \ref{sec:Applica-high-frequen-market-data}, we apply our method to high frequency financial data for which we deduce a long range nature of the decay kernels.

\section{Multivariate Hawkes Processes \label{Sec:MultivariateHawkesProcesses}}
\subsection{Notations and Definitions}
\label{Sec:MultivariateHawkesProcessesDef}
As introduced by Hawkes in \cite{hawkes_point_1971} and \cite{hawkes_spectra_1971}, let us consider an n-dimensional
point process $(N_t)_{t\geq 0}$, where $N_t^i$ $1\leq i \leq n$
represents the cumulative number of events in the $i^{th}$ component of the process $N_t$ up to time $t$.

The conditional intensity vector at any time $t$ is assumed to be a random process that depends on the past as
\begin{equation}\label{eq:Lambda_EDS_N}
    \lambda_t=\mu+ \int_{-\infty}^t\phi_{t-s}dN_s
\end{equation}
where $\mu$ is a vector of size $n$ with strictly positive components ($\mu^i>0$) and $\phi_t$  is an $n\times n$ matrix
 referred to as the \emph{decay kernel}. \\

\begin{notations}
\label{not1}
In the following
%\noindent
%{\bf Notations :}
\begin{itemize}
\item $\mathcal{M}_{n,p}(\mathds{R})$ (resp. $\mathcal{M}_{n,p}(\mathds{C})$) denotes the set of $n\times p$ matrices with values in $\mathds{R}$ (resp. $\mathds{C}$). For any matrix $M$ (resp. vector $v$), $M^{ij}$ (resp. $v^i$) denotes its elements.
\item For any function $f_t$, $\widehat f_z = \int_{\mathds{R}} e^{-zt}f_t dt$ corresponds to its Laplace transform.
\item By extension if $M_t
\in \mathcal{M}_{n,p}(\mathds{R})$ then $\widehat M_z \in \mathcal{M}_{n,p}(\mathds{C})$ corresponds to the matrix whose elements are the Laplace transforms of the elements of $M_t$.
\end{itemize}
\end{notations}

%\noindent

Using these notations, one of the main results of Hawkes  is that if

\begin{itemize}
\item[{\bf H1}]  the kernel $\phi_t \in \mathcal{M}_{n,n}$  is positive ($\phi_t^{ij}\ge 0$, $\forall t$) and causal
($\phi_t = 0$, $\forall t<0$),
\item[ and]
\item[\bf H2] the spectral radius of $\widehat {\phi}_{0}$ (i.e., its largest eigen value) is strictly smaller than 1,
\end{itemize}
\noindent
then $(N_t)_{t\ge 0}$ is a n-dimensional point process with stationary increments. The conditional intensity $\lambda_t$ is itself a stationary process with mean
\begin{equation}
\label{eq:Lambda_unconditional}
\Lambda=E(\lambda_t) = E(dN_t)/dt.
\end{equation}
Combining this last equation with Eq. (\ref{eq:Lambda_EDS_N}), one easily gets $\Lambda =\mu+ \int_{-\infty}^t\phi_{t-s}ds \Lambda =\mu+ \int_0^{\infty}\phi_udu \Lambda$ and consequently
\begin{equation}
\Lambda =(\mathbb{I}- \widehat \phi_0)^{-1}\mu,
\end{equation}
where $\mathbb{I}$ refers to the $n\times n$ identity matrix.

Before moving on, we need to introduce some more notations that will be used all along the paper.
\begin{notations} \label{not2} If $A_t\in\mathcal{M}_{m,n}(\mathds{R})$ and $B_t\in\mathcal{M}_{n,p}(\mathds{R})$ then the convolution product of $A_t$ and $B_t$ is naturally defined as $A\star B_t =\int_{\mathds{R}} A_sB_{t-s}ds=\int_{\mathds{R}} A_{t-s}B_sds$. Of course it is associative and distributive however it is generally not commutative (unless $A_t$ and $B_t$ are commutative). The neutral element is $\delta \mathbb{I}_t$, i.e., the diagonal matrix with Dirac distribution on the diagonal.
 In the following we will use the notation :
\begin{equation}
\nonumber
A \star dN_t=\int_{\mathds{R}}A_{t-s}dN_s.
\end{equation}
\end{notations}
Combining both Notations \ref{not1} and \ref{not2}, it is easy to show that
the convolution theorem on matrices translates in
\begin{equation}
\nonumber
\widehat{A \star B}_z = \widehat A_z \widehat B_z.
\end{equation}

%under the condition that $I(\phi)$ has a spectral radius of less than $1$. This makes the matrix $\mathbb{I} - I(\phi)$ invertible, with $(\mathbb{I}-I(\phi))^{-1}=\sum_{n=0}^{\infty} I(\phi)^n$.
%Similarly, if $\delta\mathbb{I}$ denotes the diagonal matrix of Dirac functions,
%we define below the inverse of $\delta\mathbb{I} -\phi$ in the $L^1$ space of positive integrable functions with
%respect to the convolution product. But first we make a review of the convolution product for matrices.

\subsection{Martingale representation of $\lambda_t$}
We roughly follow a similar path to Hawkes in \cite{hawkes_spectra_1971}.
Let $(M_t)_{t\geq 0}$ be the martingale compensated process of $(N_t)_{t\geq 0}$ defined by:
\begin{equation}\label{eq:Relation_N_to_M}
    dM_t=dN_t - \lambda_tdt \; .
\end{equation}
Then $\lambda_t$ can be represented as a stochastic integral with respect to the martingale $(M_t)_{t\geq 0}$:
\begin{proposition}
One has:
\begin{equation}\label{eq:Lambda_EDS_M}
    \lambda_t=\Lambda + \Psi\star dM_t,
\end{equation}
where $\Psi_t$ is defined as
\begin{equation}
\label{eq:Psi}
\Psi_t= \sum_{n=1}^{\infty}\phi^{(\star n)}_t,
\end{equation}
where $\phi^{(\star n)}_t$ refers to the $n^{th}$ auto-convolution of $\phi_t$ (i.e.,  $\widehat{\phi^{(\star n)}}_z = (\widehat \phi_z)^{n}$)
\end{proposition}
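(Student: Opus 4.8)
The plan is to start from the defining equation $\lambda_t = \mu + \int_{-\infty}^t \phi_{t-s}\,dN_s = \mu + \phi\star dN_t$ and substitute the martingale decomposition $dN_s = dM_s + \lambda_s\,ds$ from Eq.~(\ref{eq:Relation_N_to_M}). This yields $\lambda_t = \mu + \phi\star dM_t + \phi\star(\lambda_s\,ds)$. Writing $\lambda_t = \Lambda + \lambda_t - \Lambda$ and using $\Lambda = \mu + \widehat\phi_0\,\Lambda$ (the mean equation established in the excerpt) to cancel the constant terms, one gets the fixed-point relation $\lambda_t - \Lambda = \phi\star dM_t + \phi\star\big((\lambda_s-\Lambda)\,ds\big)$, i.e.\ the centered intensity $g_t := \lambda_t - \Lambda$ satisfies $g = \phi\star dM + \phi\star g$.

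Next I would solve this renewal-type equation by iteration: substituting the expression for $g$ into its own right-hand side repeatedly gives $g = \phi\star dM + \phi^{(\star 2)}\star dM + \dots + \phi^{(\star N)}\star dM + \phi^{(\star N)}\star g$, so formally $g = \big(\sum_{n\ge 1}\phi^{(\star n)}\big)\star dM = \Psi\star dM$, which is exactly Eq.~(\ref{eq:Lambda_EDS_M}) with $\Psi$ as in Eq.~(\ref{eq:Psi}). An alternative, cleaner route is to work directly in Laplace space: taking Laplace transforms and using the convolution theorem $\widehat{A\star B}_z = \widehat A_z\widehat B_z$, the fixed-point equation becomes $\widehat g_z = \widehat\phi_z\,\widehat{dM}_z + \widehat\phi_z\,\widehat g_z$, hence $\widehat g_z = (\mathbb{I}-\widehat\phi_z)^{-1}\widehat\phi_z\,\widehat{dM}_z$, and one identifies $(\mathbb{I}-\widehat\phi_z)^{-1}\widehat\phi_z = \sum_{n\ge 1}(\widehat\phi_z)^n = \widehat\Psi_z$, the Neumann series for $\Psi_t = \sum_{n\ge1}\phi^{(\star n)}_t$.

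The main obstacle is convergence: one must justify that the series $\sum_{n\ge 1}\phi^{(\star n)}_t$ converges (in an appropriate sense, e.g.\ entrywise in $L^1$) and that the remainder term $\phi^{(\star N)}\star g$ vanishes as $N\to\infty$. This is where hypothesis {\bf H2} enters — the spectral radius of $\widehat\phi_0 = \int_0^\infty \phi_u\,du$ being strictly less than $1$ guarantees that $\|\widehat\phi_0^{\,n}\|\to 0$ geometrically, which controls the total mass $\int_0^\infty \phi^{(\star n)}_u\,du = \widehat\phi_0^{\,n}$ of the iterated kernels and makes the Neumann series $(\mathbb{I}-\widehat\phi_z)^{-1}$ well-defined for $\Re(z)\ge 0$; positivity and causality from {\bf H1} ensure all convolutions are well-defined nonnegative objects and that the manipulations of stochastic integrals against $dM$ are legitimate. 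I would state the convergence as holding in the sense that suffices here (following Hawkes~\cite{hawkes_spectra_1971}) rather than belabor the functional-analytic details, and then simply read off Eqs.~(\ref{eq:Lambda_EDS_M})--(\ref{eq:Psi}).
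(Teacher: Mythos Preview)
Your proposal is correct and follows essentially the same approach as the paper: substitute $dN_t = dM_t + \lambda_t\,dt$ into the intensity equation and solve the resulting convolution equation via the Neumann series $\Psi = \sum_{n\ge1}\phi^{(\star n)}$. The only cosmetic difference is that you first center by subtracting $\Lambda$ and then iterate, whereas the paper convolves both sides of $(\delta\mathbb{I}-\phi)\star\lambda_t = \mu + \phi\star dM_t$ directly by $h_t=\delta\mathbb{I}_t+\Psi_t$ and identifies the constant term $h\star\mu=(\mathbb{I}-\widehat\phi_0)^{-1}\mu=\Lambda$ afterward.
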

\begin{proof}
Using equations (\ref{eq:Lambda_EDS_N}) and (\ref{eq:Relation_N_to_M})  one has:
\begin{equation}
\nonumber
    \lambda_t = \mu+ \phi\star dN_t=\mu+ \phi\star dM_t + \phi\star \lambda_t, \\
\end{equation}
and consequently
\begin{equation}
\nonumber
    (\delta\mathbb{I} - \phi)\star \lambda_t = \mu+ \phi\star dM_t.
\end{equation}
Let us note that the inverse of $ \delta\mathbb{I}_t - \phi_t$ for the convolution product is nothing but $h_t = \delta \mathbb{I}_t + \Psi_t$. Thus
convoluting on each side of the last equation by $h_t$, one gets (since $h\star\phi_t = \Psi_t$)
\begin{equation}
\nonumber
   \lambda_t=h\star\mu+ \Psi \star dM_t.
\end{equation}
Since $\mu$ is a constant one has $h\star\mu=  \mu \widehat h_0$. Using the convolution theorem one gets $\mu  \widehat h_0= (\mathbb{I}-\widehat {\phi}_0)^{-1} \mu= \Lambda$ which proves the proposition.
\end{proof}

\section{The covariance matrix of Hawkes processes \label{Sec:Cov}}
The kernel estimator we are going to build is based on the empirical auto-covariance of $(N_t)_{t\ge 0}$. This section is devoted to the covariance matrix of the n-dimensional Hawkes processes. However, we first discuss some useful results about their  {\em infinitesimal} auto-covariance function.

\subsection{The infinitesimal covariance}

Let us define the infinitesimal covariance matrix:
\begin{equation*}
\nu_{t-t'}=E(dN_tdN_{t'}^\dagger),
\end{equation*}
where $M^\dagger$ denotes the hermitian conjugate of a matrix $M$. Along the same way as Hawkes in \cite{hawkes_spectra_1971}, $\nu_{t-t'}$ can be related to the decay kernel $\phi$. We present in proposition \ref{prop:InfinitesimalCov} an equation linking the infinitesimal covariance matrix to $\Lambda$ and $\Psi$ but, unlike Hawkes, we express $\nu$ explicitly as a function of $\Psi$ and $\Lambda$ instead of an integral equation in $\nu$. This result will be at the heart of the estimation method we propose in this paper.

\begin{proposition}\label{prop:InfinitesimalCov}
Let $(N_t)_{t \ge 0}$ be an $n$-dimensional Hawkes process with intensity $\lambda_t$ as defined in Section \ref{Sec:MultivariateHawkesProcessesDef} (assuming both \textbf{H1} and \textbf{H2}). Let  $\widetilde{\Psi}_t=\Psi_{-t}$. We have the following result:
\begin{equation}
\begin{split}
    E(dN_tdN_{t'}^\dagger) = & \left(\Lambda\Lambda^\dagger +  \Sigma\delta_{t-t'} + \Psi_{t-t'} \Sigma + \Sigma\Psi^\dagger_{t'-t} \right. \\
      & \left. + \; \widetilde{\Psi}\star\Sigma\Psi^\dagger_{t'-t} \right) dtdt'
\end{split}
\label{eq:EdNdNPrime}
\end{equation}
where $\Sigma$ is the diagonal matrix defined by $\Sigma^{ii} = \Lambda^i$ for all $1\leq i \leq n$ and $\delta_t$ is the Dirac distribution.
\end{proposition}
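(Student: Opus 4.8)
The plan is to combine the two representations already in hand: the defining relation $dN_t = \lambda_t\,dt + dM_t$ (Eq.~(\ref{eq:Relation_N_to_M})) and the martingale representation $\lambda_t = \Lambda + \Psi\star dM_t$ (Eq.~(\ref{eq:Lambda_EDS_M})). Together they give the centered decomposition $dN_t = \Lambda\,dt + (\Psi\star dM_t)\,dt + dM_t$. Writing $d\bar N_t = dN_t - \Lambda\,dt = (\Psi\star dM_t)\,dt + dM_t$ and using $E(dM_t)=0$ to kill every cross term involving the deterministic part, one obtains
\begin{equation}
\nonumber
E(dN_t\,dN_{t'}^\dagger) = \Lambda\Lambda^\dagger\,dt\,dt' + E\big(d\bar N_t\,d\bar N_{t'}^\dagger\big),
\end{equation}
which already isolates the $\Lambda\Lambda^\dagger$ term and reduces the problem to the second-order structure of $M$.

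The single structural ingredient is the infinitesimal covariance of the compensated martingale. Since the $N^i$ are simple point processes with no common jumps, $dN^i_t\,dN^j_t = \delta_{ij}\,dN^i_t$, so the predictable quadratic covariation of $M$ is $d\langle M^i,M^j\rangle_t = \delta_{ij}\,\lambda^i_t\,dt$; combining this with the fact that martingale increments over disjoint intervals are uncorrelated, and taking expectations (with $E(\lambda^i_t)=\Lambda^i$), yields
\begin{equation}
\nonumber
E(dM_t\,dM_{t'}^\dagger) = \Sigma\,\delta_{t-t'}\,dt\,dt',
\end{equation}
with $\Sigma = \mathrm{diag}(\Lambda)$. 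This directly gives the $\Sigma\,\delta_{t-t'}$ term of (\ref{eq:EdNdNPrime}).

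Next I expand $E\big(d\bar N_t\,d\bar N_{t'}^\dagger\big)$ into its four pieces. The two mixed terms $E\big((\Psi\star dM_t)\,dM_{t'}^\dagger\big)\,dt$ and $E\big(dM_t\,(\Psi\star dM_{t'})^\dagger\big)\,dt'$, after a Fubini exchange and the $\delta_{t-t'}$ identity above, collapse respectively to $\Psi_{t-t'}\Sigma\,dt\,dt'$ and $\Sigma\,\Psi^\dagger_{t'-t}\,dt\,dt'$ — note that causality of $\Psi$ makes these vanish automatically when the arguments are negative, consistently with the martingale property. The remaining term $E\big((\Psi\star dM_t)(\Psi\star dM_{t'})^\dagger\big)$ is the one needing care: written as the double integral $\int\!\!\int \Psi_{t-s}\,E(dM_s\,dM_u^\dagger)\,\Psi^\dagger_{t'-u}$, the $\delta_{s-u}$ identity reduces it to $\int \Psi_{t-s}\,\Sigma\,\Psi^\dagger_{t'-s}\,ds$, and the substitution $v = t'-s$ recasts this as $\big(\widetilde\Psi\star\Sigma\Psi^\dagger\big)_{t'-t}$ with $\widetilde\Psi_t = \Psi_{-t}$, which is exactly the last term of (\ref{eq:EdNdNPrime}).

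Collecting the five contributions gives the claimed identity. The main obstacle is not any individual computation but the bookkeeping: justifying the interchange of expectation with the (causal) stochastic integrals against $dM$ — i.e. the Itô-isometry-type manipulations underlying the mixed and quadratic terms — and tracking carefully where each hermitian conjugate and each $t-t'$ versus $t'-t$ argument lands, since the formula is deliberately not symmetric under $t\leftrightarrow t'$.
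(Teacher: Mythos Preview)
Your proof is correct and follows essentially the same route as the paper: both rely on the decomposition $dN_t = dM_t + \lambda_t\,dt$, the identity $E(dM_t\,dM_{t'}^\dagger)=\Sigma\,\delta_{t-t'}\,dt\,dt'$, and the martingale representation $\lambda_t=\Lambda+\Psi\star dM_t$ to reduce everything to integrals against $E(dM_s\,dM_u^\dagger)$. The only cosmetic difference is that you substitute the representation of $\lambda_t$ into $dN_t$ before expanding and center first, whereas the paper expands $(dM_t+\lambda_t\,dt)(dM_{t'}+\lambda_{t'}\,dt')^\dagger$ and substitutes afterwards; the resulting four nontrivial terms and their evaluations are identical.
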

\begin{proof}
By using equation (\ref{eq:Relation_N_to_M}), we can write:
\begin{eqnarray}
  E(dN_tdN_{t'}^\dagger) &=& E((dM_t + \lambda_tdt)(dM_{t'} + \lambda_{t'}dt')^\dagger) \nonumber \\
   &=&  E(dM_tdM_{t'}^\dagger) \label{eq:Sub:dNdNPrime1}\\
   && + E(\lambda_tdM_{t'}^\dagger) \label{eq:Sub:dNdNPrime2}dt\\
   && + E(dM_t  \lambda_{t'}^\dagger) \label{eq:Sub:dNdNPrime3} dt' \\
   && + E(\lambda_t \lambda_{t'}^\dagger) \label{eq:Sub:dNdNPrime4}dtdt'
\end{eqnarray}
We begin by noticing that (thanks to martingale property), $E(dM_tdM_{t'}^\dagger) =0,~\forall t\neq t'$.
As for when $t=t'$, we have for all $1\leq i < j \leq n$
\begin{equation*}
E(dM^i_tdM_t^j)=0
\end{equation*}
since the $N^i$'s (and hence the $M^i$'s) for $1\leq i \leq n$ have no jump in common. Moreover, for $i=j$ we have
\begin{equation*}
E(dM^i_tdM_t^i)=\Lambda^{i}dt
\end{equation*}
because $E(dM^i_tdM_t^i)=E(dN^i_tdN_t^i)$ and $E(dN^i_tdN_t^i)=E(dN^i_t)=\Lambda^{i}dt$ since the jumps of $N_t$ are of size $1$.
\noindent To sum up, the term (\ref{eq:Sub:dNdNPrime1}) becomes:
\begin{equation}\label{eq:Subart1}
    E(dM_tdM_{t'}^\dagger)=\Sigma\delta_{t-t'}dtdt'
\end{equation}

\noindent The remaining  terms can be then calculated along the same line.
Replacing $\lambda_t$'s expression from equation (\ref{eq:Lambda_EDS_M}) in the term (\ref{eq:Sub:dNdNPrime2}),  gives:
\begin{equation}
\nonumber
  E(\lambda_t dM_{t'}^\dagger) dt= E\left(\Psi \star dM_t dM_{t'}^\dagger \right) dt,
\end{equation}
or in other words
\begin{equation}
\nonumber
E(\lambda_tdM_{t'}^\dagger)dt=\int_{\mathds{R}}\Psi_{t-s} E(dM_sdM_{t'}^\dagger) dtds
\end{equation}
and thanks to equation (\ref{eq:Subart1}),
\begin{equation}
\nonumber
E(\lambda_tdM_{t'}^\dagger)dt= \int_{\mathds{R}}\Psi_{t-s} \Sigma\delta_{s-t'}dsdt' dt
\end{equation}
that is simplified to:
\begin{equation}\label{eq:Subart2}
E(\lambda_t dM_{t'}^\dagger)dt   = \Psi_{t-t'}\Sigma dtdt'
\end{equation}
Similarly, the term (\ref{eq:Sub:dNdNPrime3}) becomes:
\begin{equation}
    E(dM_t \lambda_{t'}^\dagger)dt' =\Sigma\Psi^\dagger_{t'-t}dtdt'
\end{equation}
and finally, using Eq. (\ref{eq:Subart1}), the term (\ref{eq:Sub:dNdNPrime4}) can be written as:
\begin{eqnarray*}
E(\lambda_t \lambda_{t'}^\dagger) dtdt'   &=& \left(\Lambda E(\lambda_{t'}^\dagger) + E((\Psi\star dM_t) \lambda^\dagger_{t'}\right)dtdt' \\
                                &=& \left(\Lambda \Lambda^\dagger + \int_{\mathds{R}}\Psi_{t-s}E(dM_s\lambda_{t'}^\dagger ds)\right) dtdt'\\
                                &=& \left(\Lambda \Lambda^\dagger + \int_{\mathds{R}}\Psi_{t-s}\Sigma \Psi^\dagger_{t'-s} ds)\right) dtdt'
\end{eqnarray*}
By setting $u=t'-s$, we get:
\begin{equation}
    E(\lambda_t \lambda_{t'}^\dagger)dtdt' = \left(\Lambda \Lambda^\dagger + \widetilde{\Psi} \star \Sigma \Psi^\dagger_{t'-t}\right) dtdt',
\end{equation}
which ends the proof of the proposition.
\end{proof}

\subsection{The covariance matrix}\label{Sec:SampledACV}
The (normalized) covariance matrix of the Hawkes process can be defined, at scale $h$ and lag $\lag$, by
\begin{equation}\label{eq:SampledACV}
v^{(h)}_{\lag} =  h^{-1} \mbox{Cov} \left(N_{t+h}-N_t,N_{t+h+\lag}-N_{t+\lag}\right) \; ,
\end{equation}
where we normalized by $h$ in order to avoid a trivial scale dependence. Let us note that, since the increments of $N_t$ are stationary, the previous definition does not depend on $t$. Thus, it can be rewritten as
\begin{equation}\label{eq:SampledACV1}
v^{(h)}_{\lag} =  \frac{1}{h}E\left((\int_0^{h}dN_s-\Lambda h)(\int_{\lag}^{\lag+h}dN_s^\dagger-\Lambda^\dagger h)\right)
\end{equation}
%The information contained in the ACF allows one to recover the classical Signature Plot of the prices.
%Because it is more general, we will see that it can be used, in some case, to get a non-parametric estimation of the kernel matrix $\phi$.
%Indeed, let us choose a fixed value of the increment scale $h$ and consider $v^{(h)}_{\lag}$ as a function of the time lag $\lag$. This %function contains information about lead-lag behavior of the process that originates in the kernel $\phi$. While in reference \cite
%{BDHM2010}, the authors used $v^{(h)}_{0}$ in order to estimate the parameters of an exponential kernel, we are going to use the %more %general $v^{(h)}_{\lag}$ in order to find a non parametric estimation of $\phi$. Theorem \ref{thm:SignaturePlotExpression} %holds the key to that.
It is clear that this quantity can be easily estimated on real data using empirical means. As we will see, the non parametric estimation of the decay kernel we propose in this paper is based on these empirical estimations. More precisely, it is based on the following Theorem.
\begin{theorem}
Let $g^{(h)}_t=(1-\frac{|t|}{h})^+$. $v^{(h)}_{\lag} $ can be expressed as a function of $g^{(h)}_{\lag}$ and $\Psi_{\lag}$:
\begin{equation}\label{eq:SignaturePlotExpression}
    v^{(h)}_{\lag} =g^{(h)}_{\lag}\Sigma + g^{(h)}\star\Psi_{-\lag}\Sigma + g^{(h)}\star\Sigma\Psi^\dagger_{\lag} + g^{(h)}\star\widetilde{\Psi}\star\Sigma\Psi^\dagger_{\lag}
\end{equation}
 %Since $t$ is silent in the above expression we will use $v^{(h)}_{\lag} =v_{t,\delta,\delta}$ for any $t\in \mathds{R}$.
\end{theorem}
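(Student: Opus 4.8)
The plan is to substitute the infinitesimal-covariance formula of Proposition~\ref{prop:InfinitesimalCov} into the definition~(\ref{eq:SampledACV1}) and carry out the double time integration over $[0,h]\times[\lag,\lag+h]$; the only genuinely non-routine ingredient will be an elementary identity that turns such a double integral into a convolution against the triangle function $g^{(h)}$. First I would expand the product in~(\ref{eq:SampledACV1}). By Fubini and the stationarity identities $E\!\left(\int_0^h dN_s\right)=\Lambda h$ and $E\!\left(\int_\lag^{\lag+h} dN_u^\dagger\right)=\Lambda^\dagger h$, the terms carrying a factor $\Lambda h$ combine into $-\Lambda\Lambda^\dagger h^2$, which cancels the contribution of the $\Lambda\Lambda^\dagger\,dt\,dt'$ piece of~(\ref{eq:EdNdNPrime}). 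What is left is
\[
v^{(h)}_{\lag}=\frac1h\int_0^h\!\!\int_\lag^{\lag+h}\!\Big(\Sigma\,\delta_{s-u}+\Psi_{s-u}\Sigma+\Sigma\Psi^\dagger_{u-s}+\widetilde\Psi\star\Sigma\Psi^\dagger_{u-s}\Big)\,du\,ds.
\]

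The key step is the identity stating that, for any (possibly distributional) function $f$,
\[
\frac1h\int_0^h\!\!\int_\lag^{\lag+h}\! f(u-s)\,du\,ds=(g^{(h)}\star f)_{\lag},\qquad
\frac1h\int_0^h\!\!\int_\lag^{\lag+h}\! f(s-u)\,du\,ds=(g^{(h)}\star f)_{-\lag}.
\]
I would establish these via the change of variable $w=u-s$ (resp.\ $w=s-u$): for fixed $w$, the set of $s\in[0,h]$ whose partner lies in $[\lag,\lag+h]$ has Lebesgue measure $(h-|\lag-w|)^+$ (resp.\ $(h-|\lag+w|)^+$), i.e.\ exactly $h\,g^{(h)}_{\lag-w}$ (resp.\ $h\,g^{(h)}_{-\lag-w}$, using that $g^{(h)}$ is even); since the paper's convolution is $(A\star B)_t=\int A_{t-w}B_w\,dw$, this is precisely the asserted convolution evaluated at $\lag$ (resp.\ $-\lag$). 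Equivalently, the normalized autocorrelation of $\mathbf 1_{[0,h]}$ is the triangle $g^{(h)}$, which is the reason $g^{(h)}$ enters~(\ref{eq:SignaturePlotExpression}).

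It then remains to apply these two identities term by term to the integrand above, pulling the constant matrix $\Sigma$ and the scalar factors through the integrals while keeping the left/right placement of the $\Psi$-factors unchanged: the $\delta_{s-u}$ term gives $\Sigma(g^{(h)}\star\delta)_{\lag}=g^{(h)}_{\lag}\Sigma$; the term $\Psi_{s-u}\Sigma$ gives $g^{(h)}\star\Psi_{-\lag}\,\Sigma$; the term $\Sigma\Psi^\dagger_{u-s}$ gives $g^{(h)}\star\Sigma\Psi^\dagger_{\lag}$; and, treating $\widetilde\Psi\star\Sigma\Psi^\dagger$ as a single matrix-valued function of $u-s$ and using associativity of the convolution product (Notations~\ref{not2}), the last term gives $g^{(h)}\star\widetilde\Psi\star\Sigma\Psi^\dagger_{\lag}$. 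Summing the four contributions reproduces~(\ref{eq:SignaturePlotExpression}).

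The computation is otherwise routine; the point that needs care is the sign bookkeeping — tracking whether $s-u$ or $u-s$ occurs in each term of~(\ref{eq:EdNdNPrime}) — because this is exactly what decides whether $\Psi$ or $\widetilde\Psi=\Psi_{-\cdot}$, and the evaluation point $+\lag$ or $-\lag$, ends up in the final formula, compounded by the fact that the $\Psi$-matrices do not commute so their order must be preserved at every step. The only distributional object involved is $\delta_{s-u}$, which is handled via $g^{(h)}\star\delta=g^{(h)}$, so no measure-theoretic subtlety arises.
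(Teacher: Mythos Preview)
Your proposal is correct and follows essentially the same route as the paper: reduce $v^{(h)}_\lag$ to a double integral of the four non-constant terms in (\ref{eq:EdNdNPrime}), then invoke the elementary identity $h^{-1}\int_0^h\int_\lag^{\lag+h} f_{t-t'}\,dt'\,dt = f\star g^{(h)}_{-\lag}$ (the paper's Eq.~(\ref{eq:Lemm1})) term by term. Your treatment is in fact slightly more explicit than the paper's, since you spell out both sign variants of the identity and the change-of-variable argument behind it, whereas the paper simply states (\ref{eq:Lemm1}) and applies it.
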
\label{thm:SignaturePlotExpression}

\begin{proof}
Let us begin the proof by noticing that for any function $f$ with values in $\mathds{R}^+$ we have:
\begin{equation}\label{eq:Lemm1}
  h^{-1}  \int_0^{h}\int_{\lag}^{\lag+h}f_{t-t'}dt'dt= f \star g^{(h)}_{-\lag}
\end{equation}
It follows that
\begin{eqnarray*}
  v^{(h)}_{\lag} &=& \frac{1}{h}E\left(\int_0^{h}dN_s\int_{\lag}^{\lag+h}dN_s^\dagger \right.\\
                 &-& \left. \int_0^{h}dN_s\Lambda^\dagger h - \Lambda h \int_{\lag}^{\lag+h}dN_s^\dagger + \Lambda \Lambda^\dagger h^2\right)\\
    &=& \frac{1}{h}E\left(\int_0^{h}\int_{\lag}^{\lag+h}dN_tdN_{t'}^\dagger -\Lambda \Lambda^\dagger h^2\right) \; .
\end{eqnarray*}
Using equation (\ref{eq:EdNdNPrime}), we can split $dN_tdN_{t'}^\dagger$ into four parts and write:
\begin{eqnarray*}
  v^{(h)}_{\lag} &=&  \frac{1}{h}\int_0^{h}\int_{\lag}^{\lag+h}\left(\Sigma\delta_{t-t'} \right. \\
                 &+& \left. \Psi_{t-t'}\Sigma + \Sigma\Psi^\dagger_{t'-t} + \widetilde{\Psi}\star\Sigma\Psi^\dagger_{t'-t}\right) dtdt'\nonumber
\end{eqnarray*}
By applying equation (\ref{eq:Lemm1}) to each of the terms under the double integral we get equation (\ref{eq:SignaturePlotExpression}) and
achieve the proof.
\end{proof}

It is more convenient to rewrite the result of theorem \ref{thm:SignaturePlotExpression} in Laplace domain.
Since $\widehat{\phi^{(\star n)}}_z = {\widehat \phi^n}_z$, Eq. (\ref{eq:Psi}) translates into :
\begin{equation}
\label{eq:PsiHatFuncFiHat}
\widehat\Psi_z = \sum_{n=1}^{+\infty} {\widehat \phi}^n_z = \widehat \phi_z(\mathbb{I}-\widehat \phi_z)^{-1}
\end{equation}
and conversely:
\begin{equation}\label{eq:FiHatFuncPsiHat}
\widehat \phi_z=(\mathbb{I}+\widehat \Psi_z)^{-1}\widehat{\Psi_z}
\end{equation}
Theorem \ref{thm:SignaturePlotExpression} gives way to the following corollary.
\begin{cor}
In Laplace domain equation (\ref{eq:SignaturePlotExpression}) becomes:
\begin{equation}\label{eq:SignaturePlotExpressionFourier}
    \widehat v^{(h)}_z =\widehat g^{(h)}_z(\mathbb{I} + \widehat \Psi^*_z)\Sigma(\mathbb{I} + \widehat \Psi^*_z)^\dagger
\end{equation}
\end{cor}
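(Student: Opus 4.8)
The plan is to derive (\ref{eq:SignaturePlotExpressionFourier}) by Laplace‑transforming the four‑term identity for $v^{(h)}_\lag$ of Theorem~\ref{thm:SignaturePlotExpression} term by term and then recognising the outcome as a triple matrix product. The only inputs are: the convolution theorem $\widehat{A\star B}_z=\widehat A_z\widehat B_z$ of Notations~\ref{not1}--\ref{not2}; the fact that $\widehat g^{(h)}_z$ is a scalar, hence commutes with every matrix (its explicit value is irrelevant here); the relation $\widehat{\widetilde\Psi}_z=\widehat\Psi_{-z}$ for the time‑reversed kernel $\widetilde\Psi_t=\Psi_{-t}$, which is precisely the object we abbreviate $\widehat\Psi^*_z$; and, since $\Psi_t$ is real, $\widehat{\Psi^\dagger}_z=(\widehat\Psi^*_z)^\dagger$ (an identity on the imaginary axis, which is the locus carrying the Bartlett spectrum exploited later in the paper).

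Term by term, the first summand $g^{(h)}_\lag\Sigma$ transforms to $\widehat g^{(h)}_z\,\Sigma$; the second, $g^{(h)}\star\Psi_{-\lag}\Sigma$, to $\widehat g^{(h)}_z\,\widehat\Psi^*_z\,\Sigma$; the third, $g^{(h)}\star\Sigma\Psi^\dagger_\lag$, to $\widehat g^{(h)}_z\,\Sigma\,(\widehat\Psi^*_z)^\dagger$; and the fourth, $g^{(h)}\star\widetilde\Psi\star\Sigma\Psi^\dagger_\lag$, to $\widehat g^{(h)}_z\,\widehat\Psi^*_z\,\Sigma\,(\widehat\Psi^*_z)^\dagger$ --- each step being a single application of the convolution theorem together with the dictionary $\Psi_{-\lag},\ \widetilde\Psi,\ \Psi^\dagger \leftrightarrow \widehat\Psi^*_z,\ (\widehat\Psi^*_z)^\dagger$. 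Adding the four lines and factoring out the scalar $\widehat g^{(h)}_z$ leaves
\[
\widehat v^{(h)}_z=\widehat g^{(h)}_z\left(\Sigma+\widehat\Psi^*_z\Sigma+\Sigma(\widehat\Psi^*_z)^\dagger+\widehat\Psi^*_z\Sigma(\widehat\Psi^*_z)^\dagger\right),
\]
and the bracket is just the expansion of $(\mathbb{I}+\widehat\Psi^*_z)\,\Sigma\,(\mathbb{I}+\widehat\Psi^*_z)^\dagger$, i.e. (\ref{eq:SignaturePlotExpressionFourier}).

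The computation itself is routine; the one place to be careful --- and where a sign or a transpose is easily dropped --- is the bookkeeping of the \emph{two} distinct reflections at play: the time reversal already built into the definition of $v^{(h)}_\lag$ (surfacing in Theorem~\ref{thm:SignaturePlotExpression} as the $\Psi_{-\lag}$ and the $\widetilde\Psi$), and the Hermitian conjugation $\Psi^\dagger$ coming from the covariance. One must check that these conspire so that \emph{both} occurrences of a transform of $\Psi$ in the final product carry the \emph{same} star, for only then does the expression close up as $X\Sigma X^\dagger$; equivalently, that $M\mapsto\widehat M^*_z$ is an anti‑involution compatible with $\dagger$ that fixes the real diagonal matrix $\Sigma$. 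No analytic issue arises: hypothesis \textbf{H2} makes $\Psi_t$ summable (via (\ref{eq:Psi})--(\ref{eq:PsiHatFuncFiHat})), so all the transforms exist and the termwise passage is legitimate. As a consistency check one may observe that the bracket $(\mathbb{I}+\widehat\Psi^*_z)\Sigma(\mathbb{I}+\widehat\Psi^*_z)^\dagger$ is nothing but the matrix Bartlett spectrum of the centred increments of $N$: the white infinitesimal covariance $\Sigma\,\delta_{t-t'}$ of $dM$ is transported through the filter $\delta\mathbb{I}+\Psi$ relating $dN-\Lambda\,dt$ to $dM$ (martingale representation (\ref{eq:Lambda_EDS_M})) in exactly this fashion.
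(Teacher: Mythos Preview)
Your argument is correct and is precisely the computation the paper has in mind: the corollary is stated without proof, as an immediate consequence of Theorem~\ref{thm:SignaturePlotExpression}, and the only natural route is the term-by-term Laplace transform you carry out. Your careful tracking of the two reflections (time reversal versus $\dagger$) and the identification $\widehat{\widetilde\Psi}_z=\widehat\Psi^*_z$, $\widehat{\Psi^\dagger}_z=(\widehat\Psi^*_z)^\dagger$ on the imaginary axis is exactly what is needed to see the four summands collapse into $(\mathbb{I}+\widehat\Psi^*_z)\Sigma(\mathbb{I}+\widehat\Psi^*_z)^\dagger$.
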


\section{Non-parametric estimation of the kernel $\phi_t$}\label{sec:Estimation}
\subsection{The estimation principle}
\label{estimation_principle}
In this paper, we aim at building an estimator of $\phi_t$ based on  empirical measurements of $v^{(h)}_\lag$. Let us note that empirical measurements of $v^{(h)}_\lag$ are naturally obtained replacing probabilistic mean by empirical mean in Eq. (\ref{eq:SampledACV1}) (see \cite{bacry_scaling_2011} for proof of convergence of the empirical mean towards the probabilistic mean). Thus, in order to build an estimator, we need to express $\phi_t$  as a function of $v^{(h)}_\lag$. In the Laplace domain, since a given $\widehat \psi_z$ corresponds to a unique $\widehat \phi_z$ (through Eq. (\ref{eq:FiHatFuncPsiHat})), it translates in trying to  express $\widehat \psi_z$ as a function of $\widehat v^{(h)}_{z}$ which exactly corresponds to inverting Eq. (\ref{eq:SignaturePlotExpressionFourier}) (i.e., computing the square root of $(\mathbb{I}+\widehat \Psi^*_z)\Sigma(\mathbb{I} + \widehat \Psi^*_z)^\dagger$). Indeed, knowing $\widehat \psi_z$, one easily gets $\widehat {\phi}_z$ (from Eq. \eqref{eq:FiHatFuncPsiHat}) and finally $\phi_t$. Actually, let us note that, from a practical point of view, we don't need to work in the full complex domain $z \in \mathds{C}$ of the Laplace transform. Working with the Fourier transform restriction (i.e., $z = i\omega$ with $\omega \in \mathds{R}$) is enough to recover $\phi_t$.

\paragraph{Dealing with cancelations of $\widehat g^{(h)}_z$.}

The first problem that seems to appear for inverting this formula (\ref{eq:SignaturePlotExpressionFourier}) (i.e., expressing $\widehat \psi_z$ as a function of $\widehat v^{(h)}_{z}$ only for $z = i\omega$) is the scalar term $\widehat g^{(h)}_z$ that may vanish. Indeed, since $g^{(h)}_t=(1-\frac{|t|}{h})^+$, its Fourier transform, $\widehat g^{(h)}_{i\omega} = (4/\omega^2h)\sin^2(\omega h/2)$, cancels for all $\omega$ of the form $\frac{2n\pi}{h}$, $n \in \mathds{Z}, \, n\neq 0$. Actually, this is not a real problem as long as
\begin{itemize}
\item $\lag$ in the empirical estimation of $v^{(h)}_\lag$ is sampled using a sampling period $\Delta$ greater than $h/2$
\item $\Delta$ is small enough so that $\widehat v^{(h)}_{i \omega}$ can be considered to have a compact support
\end{itemize}
Indeed, if this is the case, then the estimation of $\widehat v^{(h)}_{i\omega}$ (practically obtained by taking the Discrete Fourier Transform (DFT) of the sampled signal $(v^{(h)}_{k\Delta})_k)$
will be equal on $[-\pi/\Delta,\pi/\Delta]$ to the product of the DFT of $g^{(h)}_\lag$ (which does not vanish since
$[-\pi/\Delta,\pi/\Delta] \subset [-\pi/h,\pi/h]$)
and the DFT of $(\delta_\lag+\widetilde{\Psi})\Sigma\star(\delta_\lag+\widetilde\Psi)^\dagger_{\lag}$.
Consequently, as long as $h$ is small enough (so that $\Delta$ can be chosen greater than $h$ and small enough), dividing on both hand sides Eq. (\ref{eq:SignaturePlotExpressionFourier}) by $\widehat g^{(h)}_{z}$ is not, from a practical estimation point of view, a real problem. So, in the following we will write
\begin{equation}\label{eq:SignaturePlotExpressionFourier1}
   (\mathbb{I} + \widehat \Psi^*_z)\Sigma (\mathbb{I} + \widehat \Psi^*_z)^\dagger =  \widehat v_z^{(h)}/\widehat g^{(h)}_z,
\end{equation}
without bothering with eventual cancelations of $\widehat g^{(h)}_z$.

\paragraph{Computing the square root of $(\mathbb{I}+\widehat \Psi^*_z)\Sigma(\mathbb{I} + \widehat \Psi^*_z)^\dagger$.}
% {\bf Phase determination of $\widehat \Psi_z$.}
In the estimation process, once we have estimated  $v^{(h)}_{\lag}$ and consequently  (through DFT) $\widehat v^{(h)}_{z}$ (for $z=i\omega$), using Eq \eqref{eq:SignaturePlotExpressionFourier1}, we can estimate $ (\mathbb{I} + \widehat \Psi^*_z) \Sigma (\mathbb{I} + \widehat \Psi^*_z)^\dagger$. We need to go from there to the estimation of $\widehat \Psi_z$ (then using Eq. (\ref{eq:FiHatFuncPsiHat}), we get $\widehat {\phi}_z$ and then by inverse Fourier transform ${\phi}_t$). This problem requires therefore to take the square root of the left hand side of Eq. (\ref{eq:SignaturePlotExpressionFourier1}).
In dimension $n=1$, it means being able to go from $|1 + \widehat \Psi_z|^2$ to $\widehat{\Psi}_z$. There is clearly a phase determination problem. We will see that, in dimension $n=1$, this phase is uniquely determined by the hypothesis \textbf{H1} and \textbf{H2}. However, in dimension $n>1$, they are many possible solutions and determining the correct one is not necessarily possible in general. We need to make a strong additional hypothesis.

\subsection{Further hypothesis on the kernel $\phi_t$}\label{subsec:assump}

In the following, we suppose that

\begin{itemize}
\item[{\bf H3}] $\Sigma = \meanlambda \mathbb{I}$ with  $\forall ~ i$, $E(\lambda_t^{i})=\meanlambda \in \mathds{R}$, and $\widehat \phi_z$ can be diagonalized into a matrix $D$ with some constant unitary matrix $U$ (that does not depend on $z$):
\begin{equation}
\label{eq:assumption}
\widehat \phi_z = U^{\dagger}D_z U.
\end{equation}
\end{itemize}

Let us point out that, though always true in dimension $n=1$,   \textbf{H3} is a strong hypothesis for dimension $n\ge 2$. Clearly it is satisfied in the case all the components of the process are identically distributed, i.e.\, the process is invariant under arbitrary permutations. Other cases are very specific and consequently are not discussed in this paper.

\subsection{The estimator}\label{ssec:estimat}
Using  Eq. \eqref{eq:PsiHatFuncFiHat} along with \textbf{H3}, one gets that $\mathbb{I} + \widehat \Psi^*_z$ is diagonalizable in the same basis $U$ and that
\begin{equation}
\mathbb{I} + \widehat \Psi^*_z = U^\dagger (\mathbb{I} -D_z^*)^{-1} U,
\end{equation}
and thus from (\ref{eq:SignaturePlotExpressionFourier1}), one gets
\begin{equation}
\label{eq:Ez}
 E_z = U \widehat v^{(h)}_{z} U^{\dagger}/(\meanlambda \widehat g^{(h)}_z),
\end{equation}
where $E_z$ is the diagonal matrix with the real positive coefficients :
\begin{equation}
E_z^{kk} = |1 - D_z^{kk}|^{-2}.
\end{equation}

So the estimation problem reduces to being able to recover the coefficients $D_z^{kk}$ from the coefficients $E_z^{kk} = |1 - D_z^{kk}|^{-2}$. This problem is solved by the following Lemma.

\begin{lem}
Let $k$ be fixed. Let $z = i\omega$ (with $\omega \in \mathds{R}$).
Then
\begin{equation}
\label{eq:lem}
(1-D^{(kk)}_{i\omega})^{-1} = e^{\frac 1 2 \log|E_{i\omega}^{kk}| -i H(\frac 1 2 \log|E_{i\omega}^{kk}|)},
\end{equation}
where the operator $H(.)$ refers to the Hilbert transform \cite{oppenheim_discrete-time_1999}.
\end{lem}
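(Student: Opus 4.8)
The plan is to identify $f(z):=(1-D^{kk}_z)^{-1}$ as a \emph{minimum-phase} function — analytic and nowhere zero on the closed right half-plane $\{\Re z\ge 0\}$ — and then to invoke the classical dispersion relation that, for such functions, expresses the phase on the imaginary axis as (minus) the Hilbert transform of the log-modulus.

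First I would check the analyticity and non-vanishing of $1-D^{kk}_z$. By \textbf{H1} the kernel is causal, so with the convention $\widehat\phi_z=\int_0^{\infty}e^{-zt}\phi_t\,dt$ each entry of $\widehat\phi_z$ is analytic for $\Re z>0$, continuous up to $\Re z=0$, and satisfies the entrywise bound $|\widehat\phi^{jl}_z|\le\int_0^{\infty}e^{-t\Re z}\phi^{jl}_t\,dt\le\widehat\phi^{jl}_0$ because $\phi^{jl}_t\ge 0$. Since $\widehat\phi_0$ has nonnegative entries, the spectral radius is monotone under this domination, whence $\rho(\widehat\phi_z)\le\rho(\widehat\phi_0)=:r<1$ by \textbf{H2} for every $z$ with $\Re z\ge 0$. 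Under \textbf{H3}, $D^{kk}_z$ is an eigenvalue of $\widehat\phi_z$, so $|D^{kk}_z|\le r<1$; in particular $\Re(1-D^{kk}_z)>0$, so $1-D^{kk}_z$ stays in the open right half-plane of $\mathds{C}$ and never vanishes. Hence $g(z):=\log f(z)=-\log(1-D^{kk}_z)$, taken with the principal branch, is analytic on $\{\Re z\ge 0\}$; it is bounded there (since $1-r\le|1-D^{kk}_z|\le 2$, so $|g|\le|\log(1-r)|+\pi$), and $g(z)\to 0$ as $|z|\to\infty$ in the half-plane because $\widehat\phi_z\to 0$ by dominated convergence. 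Thus $g$ belongs to a Hardy class of the right half-plane.

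For a function in such a class, the real and imaginary parts of its boundary value $g(i\omega)$ form a Hilbert-transform pair (the Kramers--Kronig / conjugate-function relation; after the change of variable $z\mapsto e^{-z}$ that sends $\{\Re z\ge0\}$ onto the unit disk, this is precisely the minimum-phase relation of \cite{oppenheim_discrete-time_1999}). With the sign convention for $H$ used there this reads $\Im g(i\omega)=-H\bigl(\Re g(i\omega)\bigr)$. Since $E^{kk}_{i\omega}=|1-D^{kk}_{i\omega}|^{-2}$ is real and positive, one has $\Re g(i\omega)=\log|1-D^{kk}_{i\omega}|^{-1}=\tfrac12\log|E^{kk}_{i\omega}|$, hence $\Im g(i\omega)=-H\bigl(\tfrac12\log|E^{kk}_{i\omega}|\bigr)$; exponentiating $g(i\omega)=\Re g(i\omega)+i\,\Im g(i\omega)$ then gives exactly \eqref{eq:lem}.

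The main obstacle is this last step: making the conjugate-function relation rigorous requires controlling the growth/integrability of $g$ on the half-plane (boundedness together with decay at infinity, established above, is comfortably sufficient) and carefully tracking the orientation of the half-plane and the sign convention of $H$. Conceptually, the role of \textbf{H1}--\textbf{H2} is precisely to force $f$ to be minimum phase, so that the phase is a genuine function of the modulus; it is this that makes the square root in \eqref{eq:SignaturePlotExpressionFourier1} uniquely invertible in dimension one, and it is exactly the property that can fail when $n>1$ (hence the need for \textbf{H3}).
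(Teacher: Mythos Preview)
Your argument is correct and follows essentially the same route as the paper: establish that $(1-D^{kk}_z)^{-1}$ is a minimum-phase (outer) function on the right half-plane, and then invoke the log-modulus/phase dispersion relation, which the paper packages as the Paley--Wiener theorem while you phrase it as the Hardy-space conjugate-function identity. Your uniform bound $|D^{kk}_z|\le\rho(\widehat\phi_z)\le\rho(\widehat\phi_0)<1$ via Perron--Frobenius spectral-radius monotonicity is in fact a cleaner way to rule out poles than the paper's brief argument, and it simultaneously gives the boundedness of $\log(1-D^{kk}_z)$ needed for the Hilbert-transform step.
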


The proof is based on the following theorem:

\begin{theorem}[Paley-Wiener \cite{paley_fourier_1934}]
Let's suppose that we observe the amplitude $|\widehat f_{i\omega}|$ of the Fourier transform of a real filter $f_t$.
If
\begin{equation}\label{eq:Paley-Wiener}
    \int_{\mathds{R}}\frac{\log(|\widehat{f}_{i\omega}|)}{1+\omega^2}d\omega ~~ < ~~ \infty,
\end{equation}
then the filter $g_t$ defined by its Fourier transform
\begin{equation}
\widehat{g}_{i\omega}=e^{\log(|\widehat{f}_{i\omega}|) - i H(\log(|\widehat{f}_{i\omega}|))},
\end{equation}
is the only causal filter (i.e., supported by $\mathds{R}^+$), which is a phase minimal filter\footnote{a minimal phase filter \cite{oppenheim_discrete-time_1999} is a filter whose all the zeros and the poles of its Laplace transform satisfy $\Re(z)<0$} and which satisfies $|\widehat g_{i\omega}| = |\widehat f_{i\omega}|$.
\end{theorem}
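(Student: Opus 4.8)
The plan is to exhibit the right-hand side of the claimed formula as the boundary value, on the imaginary axis, of a single function analytic on the open right half-plane $\{\Re z>0\}$, then to check that this analytic function is the Laplace transform of a causal, minimal-phase filter with the prescribed modulus, and finally to show that any competitor differs from it only by a unimodular constant that realness pins down. Throughout, the right half-plane plays the role of the Hardy-space domain because a filter supported on $\mathds{R}^+$ has Laplace transform analytic there, and minimal phase means precisely that all zeros and poles of $\widehat f_z$ lie in $\Re z<0$.

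\emph{Construction of the analytic logarithm.} Write $A(\omega)=\log|\widehat f_{i\omega}|$. The integrability hypothesis \eqref{eq:Paley-Wiener} is exactly the condition needed for the Poisson integral of $A$ against the half-plane Poisson kernel $P_z(\omega)=\frac1\pi\,\frac{\Re z}{(\Re z)^2+(\Im z-\omega)^2}$ to converge for every $z$ with $\Re z>0$, since $P_z(\omega)=O(1/\omega^2)$ at infinity and $1/(1+\omega^2)$ is the boundary weight. Let $u(z)=\int_{\mathds R}P_z(\omega)A(\omega)\,d\omega$, harmonic on $\Re z>0$, and let $v$ be the harmonic conjugate obtained from the conjugate Poisson kernel (normalised to vanish on the positive real axis), so that $L=u+iv$ is analytic on $\Re z>0$. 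By Fatou's theorem the nontangential boundary values satisfy $u(i\omega)=A(\omega)$ a.e., while the classical identification of the conjugate Poisson integral gives $v(i\omega)=-H(A)(\omega)$ a.e., with $H$ the Hilbert transform in the sign convention of the statement. Setting $\widehat g_z=e^{L(z)}$ therefore gives $\widehat g_{i\omega}=e^{A(\omega)-iH(A)(\omega)}$, hence $|\widehat g_{i\omega}|=e^{A(\omega)}=|\widehat f_{i\omega}|$: the amplitude constraint holds by construction.

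\emph{Causality and minimal phase.} Because $\widehat g_z=e^{L(z)}$ with $L$ analytic on all of $\Re z>0$, the function $\widehat g$ is analytic, zero-free and pole-free there, which is the minimal-phase property. For causality one checks that $\widehat g$ lies in a Hardy space $H^p$ of the right half-plane — this follows from mild integrability/boundedness of $|\widehat f_{i\omega}|$, which controls $e^{u}$, together with a uniform bound on $u(x+i\omega)$ as $x\to0^+$ — and then the Paley--Wiener characterisation of $H^p$ says that the inverse Fourier transform $g_t$ of $\widehat g_{i\omega}$ is supported in $[0,\infty)$, i.e. $g$ is a causal filter. Realness of $g_t$ comes from the Hermitian symmetry $\widehat g_{-i\omega}=\overline{\widehat g_{i\omega}}$, itself a consequence of $A$ being even and $H(A)$ odd. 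This establishes existence of the claimed filter. For uniqueness, let $g'$ be any causal minimal-phase filter with $|\widehat g'_{i\omega}|=|\widehat f_{i\omega}|$ and set $R(z)=\widehat g'_z/\widehat g_z$; both transforms are analytic and non-vanishing on $\Re z>0$, so $R$ and $1/R$ are analytic there, with $|R(i\omega)|=1$ on the boundary. A maximum-modulus / Phragm\'en--Lindel\"of argument applied to $R$ and to $1/R$ forces $|R|\le1$ and $|R|\ge1$ on $\Re z>0$, so $R$ is a unimodular constant $c$; realness makes $c=\pm1$, and the normalisation fixing $\widehat g$ as $\Re z\to+\infty$ removes the sign, giving $g'=g$.

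\emph{Where the difficulty sits.} The delicate points are analytic rather than algebraic: checking that \eqref{eq:Paley-Wiener} is precisely the convergence condition for the half-plane Poisson integral of $\log|\widehat f|$; invoking the correct Fatou-type boundary theorems to identify the imaginary part of the boundary trace of $L$ with $-H(\log|\widehat f|)$; and establishing membership of $\widehat g$ in a Hardy space so that causality really does follow from Paley--Wiener. The uniqueness step hides a comparable subtlety in the Phragm\'en--Lindel\"of bound at infinity and in the harmless $\pm1$ ambiguity produced by realness. These are the places where one must be careful about the precise regularity assumed on $|\widehat f_{i\omega}|$ beyond the bare logarithmic-integrability hypothesis.
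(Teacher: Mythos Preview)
The paper does not prove this theorem at all: it is quoted as a classical result, attributed to Paley--Wiener and to the signal-processing literature (Oppenheim), and then \emph{applied} to prove the Lemma that precedes it. There is therefore no ``paper's own proof'' to compare against; your write-up is an independent sketch of the classical argument.

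As such a sketch, it is sound and follows the standard route: build the outer function by taking the Poisson extension of $\log|\widehat f_{i\omega}|$ to the right half-plane, complete it to an analytic logarithm via the conjugate Poisson kernel, exponentiate, and identify the boundary phase with $-H(\log|\widehat f|)$; causality then comes from Hardy-space membership and the Paley--Wiener characterisation, while uniqueness among minimal-phase filters is the usual ratio-plus-Phragm\'en--Lindel\"of argument. You have also correctly flagged the genuine technical points (Fatou-type boundary identification, the Hardy-space membership needed for causality, and the residual $\pm1$ ambiguity that the theorem as stated in the paper glosses over). For the purposes of this paper none of that extra care is needed: the theorem is simply invoked, and the only work the authors do is to verify, in the proof of the Lemma, that $(1-D_z^{kk})^{-1}$ is indeed minimal phase and satisfies the integrability condition~\eqref{eq:Paley-Wiener}.
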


\noindent
{\em Proof of the Lemma.}
It is a simple application of this theorem with $|\widehat f_{i\omega}| = \sqrt{E_z^{kk}} = |1 - D_{i\omega}^{(kk)}|^{-1}$. Indeed, let us first show that $\widehat g_{z} = (1 - D_{z}^{(kk)})^{-1}$ is a minimal phase filter, i.e.,  that both the poles and zeros are such that $\Re(z)<0$.
\begin{itemize}
\item Let $z$ be a zero of $(1 - D_z^{kk})^{-1}$, then it is a pole of $D_z^{kk}$ and consequently of $\widehat \phi_z$. However, from \textbf{H1} and \textbf{H2}
one concludes that
$|\widehat \phi_z| \le \int_0^{+\infty} |e^{-zt}| \phi_t dt$  cannot be infinite, unless $\Re(z)<0$.
\item Let $z$ be a pole of $(1 - D_z^{kk})^{-1}$. It thus satisfies $D_z^{kk} = 1$. Thus 1 is an eigenvalue of $\widehat {\phi}_{0}$ which is in contradiction with \textbf{H2}. Thus $(1 - D_z^{kk})^{-1}$ has no pole.
\end{itemize}
Consequently $\widehat g_{z} =(1 - D_z^{kk})^{-1}$ is a minimal phase filter.
Moreover, since every coefficients of $\phi_t$ are positive and in $L^1$, the Fourier transforms
$D_{i\omega}^{(kk)}$, for any $k$,  are continuous functions of $\omega$ and goes to 0 at infinity. Along with the fact that $(1 - D_z^{kk})^{-1}$ has no pole and has its zeros on the half-plane $\Re(z)<0$ we easily conclude that $|\widehat f_{i\omega}|  = |1 - D_z^{kk}|^{-1}$ satisfies  (\ref{eq:Paley-Wiener}).
The theorem above can be applied and the Lemma follows.

\vskip .3cm
\noindent
{\bf Main steps for kernel estimation.}
The different steps for the final kernel estimator, in the fully symmetric case,
 can be summarized as follows
\begin{itemize}
\item Set $\Delta$ small enough (see Section 4.1) and fix $h = \Delta$,
\item Estimate the unconditional intensity $\meanlambda$,
\item Estimate the auto-covariance operator $v_t^{(h)}$ and compute its Fourier transform $\widehat v_{i\omega}^{(h)}$,
\item Compute   $(\mathbb{I} + \widehat \Psi^*_{i\omega})(\mathbb{I} + \widehat{\Psi}^*_{i\omega})^\dagger$ using Eq. (\ref{eq:SignaturePlotExpressionFourier1}). Diagonalize it and compute the matrix $E_{i\omega}$ defined by Eq. (\ref{eq:Ez}),
\item Compute the diagonal matrix $D_{i\omega}$ using (\ref{eq:lem}),
\item Go back to the initial basis and Inverse Fourier transform to get the estimation of $\phi_t$.
\end{itemize}

\subsection{Some particular cases}

\subsubsection{The one dimensional case $n = 1$}
\label{result1}

As we already pointed out, the hypothesis \textbf{H3} is always true in this case since all the functions are scalar functions. Thus, the phase determination problem is solved without adding any assumption apart from  \textbf{H1} and \textbf{H2}.
The kernel estimator simply consists in first computing
\begin{equation}
   |1 + \widehat \Psi_{i\omega}|^2 = \frac{ \widehat v_{i\omega}^{(h)}}{\meanlambda \widehat g^{(h)}_{i\omega}},
\end{equation}
and then inverse Fourier transform of
\begin{equation}
\widehat \phi_{i\omega} = 1-e^{-\log| 1 + \widehat \Psi_{i\omega}| +i H(\log| 1 + \widehat \Psi_{i\omega}|)}.
\end{equation}

\subsubsection{The bisymmetric 2-dimensional case $n = 2$}
\label{result2}
In the two-dimensional case, the hypothesis \textbf{H3} is satisfied in the particular case
\begin{equation}
\mu^1 = \mu^2,
\end{equation}
and the kernel $\phi_t$ is {\em bisymmetric}, i.e.,  has the form
\begin{equation}
\label{sym}
\phi_t =
 \left(
\begin{array}{cc}
 \phi^{11}_t & \phi^{12}_t \\
 \phi^{12}_t & \phi^{11}_t \\
\end{array}
\right)
\end{equation}

The matrix of the Laplace transform $\widehat \phi_z$ can be indeed decomposed
as follows:
\begin{equation*}
    \widehat \phi_z= U^{\dagger}
\left(
\begin{array}{cc}
 - \widehat \phi^{11}_z +\widehat \phi^{12}_z & 0 \\
 0 &  \widehat \phi^{11}_z +\widehat \phi^{12}_z
\end{array}
\right) U
\end{equation*}
where $U=\frac{1}{\sqrt{2}}\left(
\begin{array}{cc}
 -1 & 1 \\
 1 & 1
\end{array}
\right)$.

Diagonalizing and identifying the diagonal coefficients in both hand sides of (\ref{eq:Ez}) leads to

\begin{eqnarray}
\frac{\widehat v^{(h)11}_{i\omega}+\widehat v^{(h)12}_{i\omega}}{\meanlambda\widehat g^{(h)}_{i\omega}} &=& |1 + \widehat \Psi_{i\omega}^{11}+\widehat \Psi_{i\omega}^{12}|^2 \label{eq:OnePlusFiMagnitudeN2_1}\\
\frac{\widehat v^{(h)11}_{i\omega}-\widehat v^{(h)12}_{i\omega}}{\meanlambda\widehat g^{(h)}_{i\omega}} &=& |1 + \widehat \Psi_{i\omega}^{11}-\widehat \Psi_{i\omega}^{12}|^2 \label{eq:OnePlusFiMagnitudeN2_2}
\end{eqnarray}

Applying the same method used in the case $n=1$ to these last two equations,  we get an estimate of $\widehat \Psi$. Finally we apply
\begin{equation}\label{eq:FiFuncPsiBySym}
    \widehat \phi_z = (\mathbb{I} + \widehat \Psi_z)^{-1}\widehat \Psi_z
\end{equation}
giving $\widehat \phi$. Applying the inverse transform to $\widehat \phi$ we finally get $\phi$.

In the following section we illustrate these results and our methods on numerical
simulations of 1D and 2D Hawkes processes.

\section{Numerical illustrations \label{Sec:Num}}

Let us discuss some examples illustrating the estimation method as defined previously using
simulated Hawkes processes. All the simulations have been performed with the thinning algorithm
described in \cite{ogata_lewis_1981}.

We estimate $v^{(h)}_\lag$  from the realization of a Hawkes process $(X_t)$, $t \geq 0$ (1D or 2D). We then strictly follow the method described in previous section in order to estimate the decay kernel $\phi$.
$v^{(h)}_{\lag}$ is sampled at rate $\Delta$ (i.e., $\lag = n\Delta$) up to a maximum lag $\lag_{max}$. As explained in Section \ref{estimation_principle},  in order to avoid problems related to the zeros of $\widehat g^{(h)}$, we naturally choose $h = \Delta$. From a practical point of view, $\Delta$ has to be chosen small enough in order to avoid Fourier aliasing.

\subsection{The case $n=1$}\label{sec:Est:sub:N1:NumEst}

\subsubsection{Exponential kernel}
We first simulate a one dimensional Hawkes process with an exponential kernel:
\begin{equation}
\label{eq:FiExponentialDecay}
    \phi_t ~~=~~\alpha e^{-\beta t} \mathds{1}_{t\geq 0}
\end{equation}
where we choose ($\mu=1$, $\alpha=1$, $\beta=4$). This gives $\widehat \phi_0 = 1/4$, $\meanlambda= 4/3$. The simulated sample contains $130000$  jumps (it is approximately $T \simeq 10^5$ seconds long). In figure \ref{fig:Hilbert-Estimation}(a) we have reported the estimated kernel function $\phi_t$ (circles) on top of the true kernel (solid line). We see that the estimated kernel function is, up to some noise, very close to the real kernel
%\footnote{Notice that since the theoretical kernel is not defined in $t=0$ since it jumps from $\phi(0_-)=0$ to a finite value $\phi(0_+)$. In that case, one can show that the estimation method returns, at $t=0$, the mean value $\frac{\phi(0_-)+\phi(0_+)}{2} = \frac{\phi(0_+)}{2}$}. Let us note that, in order to reduce the error of the estimator, using a priori knowledge on the decay kernel, smoothing techniques could have been applied to the causal part of $\phi$ (e.g., a moving average of $\phi$ or a low pass filter).
\begin{figure}[h]
\centering
\hspace{\myspace}
\hspace{\myspaceside}
\resizebox{\myfactside\textwidth}{!}{
\includegraphics{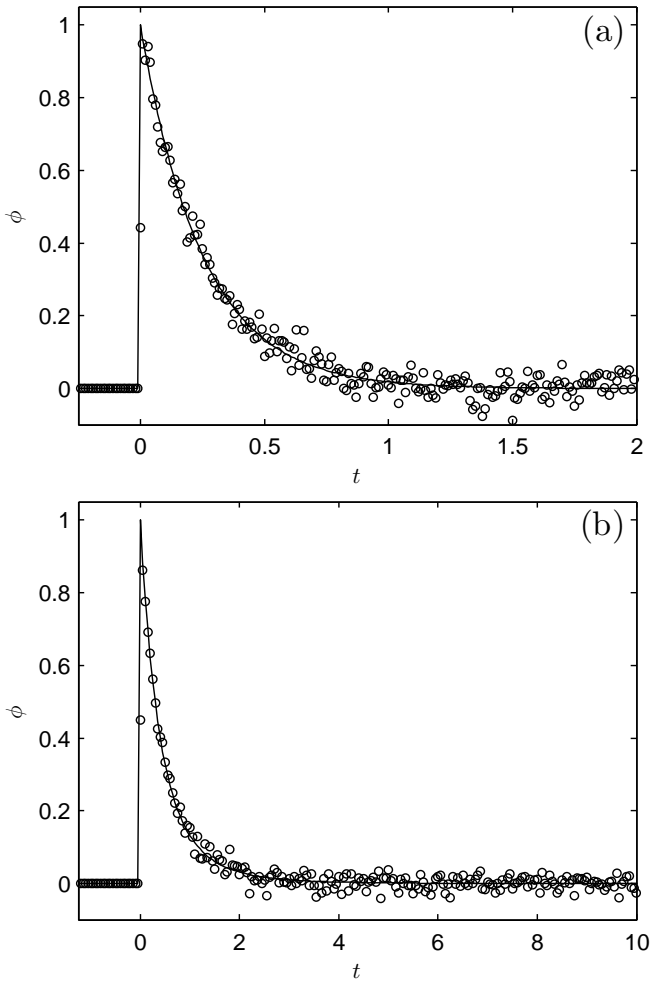}
}
\caption{Non parametric estimation
 of a one dimensional Hawkes process using method described in Section \ref{result1} from a unique realization with  130000 jumps. Estimated ($\circ$) and analytical kernel (solid line) are shown. (a) Case of the exponential decay kernel $\phi$ (Eq. (\ref{eq:FiExponentialDecay})) with $\alpha=1$, $\beta=4$. We used $\Delta=0.01$ and $\lag_{\max}=2$. (b) Case of the power-law decay kernel $\phi$ (Eq. (\eqref{eq:FiPowerLawDecay})), with $\alpha=32$, $\beta=-5$ and $\gamma=2$. We used $\Delta=0.05$ and $\lag_{\max}=20$.}
 \label{fig:Hilbert-Estimation}
\end{figure}

\subsubsection{Power Law Decay}
We now consider a power-law decaying kernel defined as:
\begin{equation}\label{eq:FiPowerLawDecay}
    \phi_t~~=~~\alpha (t+\gamma)^{\beta} \mathds{1}_{t\geq 0}
\end{equation}
with $\beta < -1$. %Such kernels are notably widely used in Hawkes statistical models of earthquake occurrences such as the ETAS model \cite{ETAS}.
In this case we have $\widehat \phi_0=-\frac{\alpha}{\beta+1}\gamma^{\beta+1}$. We choose $\alpha=32$ and $\beta=-5$ and $\gamma=2$ making $\widehat \phi_0=0.5<1$ and $\meanlambda=2$ again with 130000 jumps ($T \simeq 65000$ secs). The kernel estimated on a single sample is reported in \ref{fig:Hilbert-Estimation}(b). Once again, one can see that, up to an additive noise, the estimated kernel fits well the real one. Let us point out that, since the decay is much slower than in the previous exponential case, we chose the maximum lag $\lag_{max}$ to be ten times as much.

\subsubsection{Error Analysis}
\label{erroranalysis}
Let us briefly discuss some issues related to the errors associated with our kernel estimates. In ref. \cite{bacry_scaling_2011}, a central limit theorem has been proved that shows that, asymptotically, the errors of the empirical covariance function estimates are normally distributed with a variance that decreases as $T^{-1}$ (or $N^{-1}$ for $\Delta$ fixed). One thus expects the same kind of results in the estimates of the components of $\phi$.

Let us define the $L^2$ estimation error as:
\begin{equation}\label{eq:Est_Error_2}
    e^2=\sum_{k=1}^{\frac{\lag_{\max}}{\Delta}}|\phi_{k\Delta} - \phi^{(e)}_{k\Delta}|^2
\end{equation}
where $\phi^{(e)}$ is the estimated kernel. In Fig. \ref{fig:ConvergenceSpeed}(a), we have reported $e^2$ as a function of the sample length $T$ for the same realization as the one used in Fig. \ref{fig:Hilbert-Estimation}(a). As expected, one observes a behavior very close to $T^{-1}$.
\begin{figure}[h]
\centering
\hspace{\myspaceside}
\resizebox{\myfactside\textwidth}{!}{
       \includegraphics{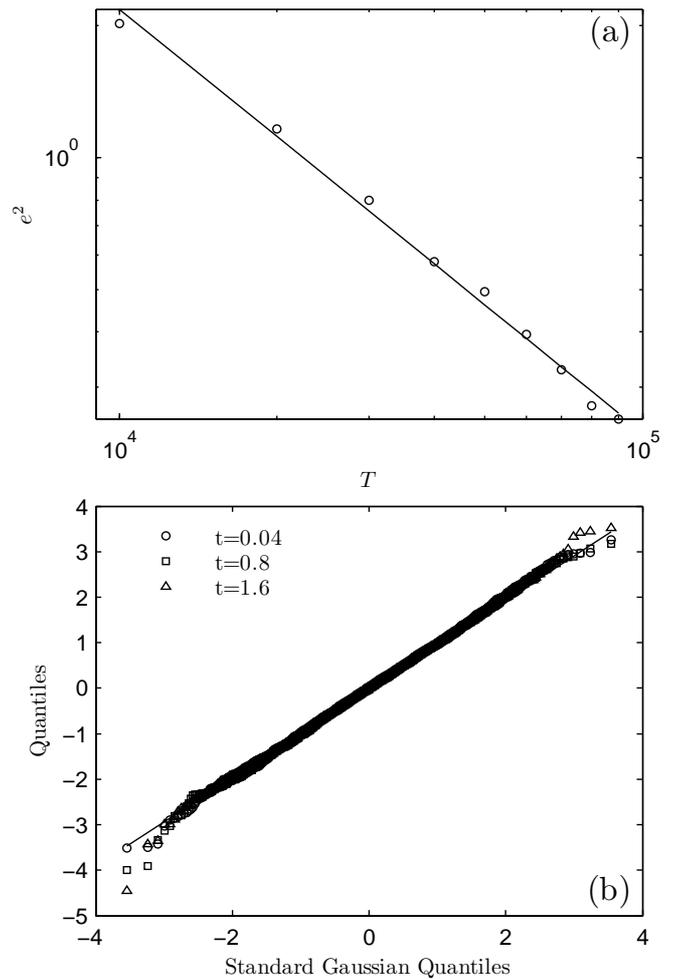}
}
        \caption{Error analysis of a one dimensional Hawkes process with the same exponential kernel as in Fig. \ref{fig:Hilbert-Estimation}(a) with $\Delta = 0.01$ and $\lag_{max = 2}$.
        (a) Mean square error as a function of the sample length $T$ in log-log coordinates. The solid-line corresponds to the curve the expected $T^{-1}$ behavior. (b) qq-plots of the Normal probability distribution function (pdf) versus empirical error pdf for $\phi_t$ for three different values of $t$. Here $T=10^5$ seconds is fixed. The empirical error pdf's have been normalized to have the same variance (the variance increases as $t$ goes to 0).
}
\label{fig:ConvergenceSpeed}
\end{figure}
Let us now look at the error between the analytical $\phi$ and the estimated one for a fixed $t>0$. We find that, for each $t$, the series of errors are centered gaussian and mainly uncorrelated. Their variance increases as $t$ decreases to 0.  The normality of the observed errors is illustrated in Fig. \ref{fig:ConvergenceSpeed}(b) where we report, for 3 different values of $t$, the qq-plots of the empirical error pdf (with standardized variance) versus the normal pdf.

Let us point out that the estimation error depends, a priori, on the sampling rate $\Delta$. Numerical simulations show that there is an optimal choice for the sampling rate parameter $\Delta$. Indeed, we found that
the estimation error is minimum for a finite value $\Delta = \Delta^*$ which is neither large nor close to 0. For instance, for the process used in Fig. \ref{fig:Hilbert-Estimation}(a), we found $\Delta^* \simeq 0.15$. The fact that there exists such a minimum is not that surprising. On the one hand, if $\Delta$ is too large the error is dominated by Fourier aliasing. On the other hand, if $\Delta$ is too small, estimating $v^{(\Delta)}_{\lag}$ for a fixed $\lag>0$ corresponds to estimating the correlation between the increments at scale $\Delta$ of two point processes. Such an estimation is well known to converge to 0 when the time-scale $\Delta$ goes to 0. In the Finance literature (see, e.g., \cite{bacry_modeling_2011}),  this is known as the {\em Epps effect}. Of course, the optimal value $\Delta^*$ is not known a priori. However, it is natural to think that it is inversely  proportional to $\meanlambda$. On practical situations we advocate to start choosing $\Delta$ of the order of $0.1\meanlambda$ and then play around this value.

\subsection{The bisymmetric 2-dimensional case $n = 2$ with an exponential kernel}
\label{ssec:case-n=2}
Let us now consider a 2-dimensional bisymmetric Hawkes process introduced in Section \ref{result2}. Both the diagonal term $\phi^{(d)}$ and the anti-diagonal term $\phi^{(a)}$ (cf Eq. \eqref{sym}) have an exponential form
\begin{align}\label{eq:Fi2DimSimu}
\phi_{t}^{(d)} =   \phi^{11}_t &= \phi^{22}_t = \alpha_de^{-\beta_d t} \mathds{1}_{t\geq 0} \\
\nonumber
\phi_{t}^{(a)} =   \phi^{12}_t &= \phi^{21}_t = \alpha_ae^{-\beta_a t} \mathds{1}_{t\geq 0}.
\end{align}
We use the following parameters for the simulations, $\alpha_d=0.5$, $\beta_d=8$, $\alpha_a=1$, $\beta_a=4$, and $\mu^1 = \mu^2 = 1$ (consequently $\meanlambda = 1.45$). We simulate about $60000$ jumps for each of the two components of the Hawkes process. The Figure \ref{fig:Hilbert-EstSimu-Expon-dim-2} below shows the theoretical (solid lines) and estimated ($\circ$)versions of $\phi_d$ (in Fig.  \ref{fig:Hilbert-EstSimu-Expon-dim-2}(a) and $\phi_a$ in Fig.  \ref{fig:Hilbert-EstSimu-Expon-dim-2}(b)).
We see that, as in the 1D case, we get a reliable estimate of both kernels.
\begin{figure}[h]
    \centering
    \hspace{\myspaceside}
    \hspace{\myspace}
\resizebox{\myfactside\textwidth}{!}{
\includegraphics{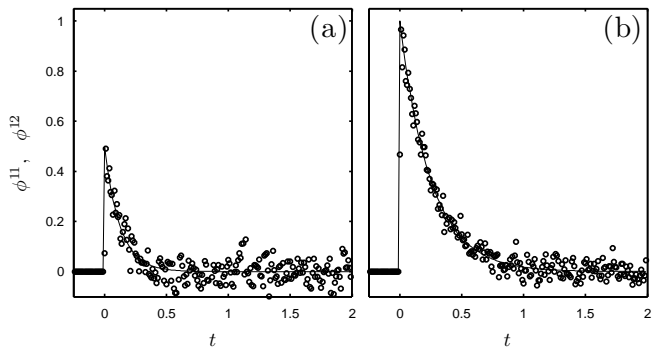}
}
        \caption{Non parametric estimation of the two dimensional Hawkes bisymmetric exponential
         kernel $\phi$ (Eq. (\ref{eq:Fi2DimSimu})) from a unique realization with  60000 jumps.
         We chose $\alpha_d=0.5$, $\beta_d=8$, $\alpha_a=1$, $\beta_a=4$, and $\mu^1 = \mu^2 = 1$ ($\meanlambda = 1.4545$)
         We used $\Delta=0.05$ and $\tau_{\max}=2$.  Estimated ($\circ$) and analytical kernel (solid line) are shown.
         (a) Estimation of the diagonal term $\phi_{t}^{(d)}$. (b) Estimation of the anti-diagonal term $\phi_{t}^{(a)}$}
         \label{fig:Hilbert-EstSimu-Expon-dim-2}
\end{figure}

\section{Application to high frequency market data \label{sec:Applica-high-frequen-market-data}}
As mentioned in the introduction, Hawkes point processes have found many applications and notably as models for high frequency financial data. Indeed, because of the discrete and correlated nature of trade and limit order arrival times, point processes are  natural  models of market dynamics at the microstructure level. One can mention for instance Ref. \cite{hewlett_clustering_2006} where buy and sell trades arrivals are represented by a bivariate Hawkes process with exponential kernels (see also \cite{large_measuring_2007}). Another approach, developed in \cite{bacry_modeling_2011}, consists in describing high frequency price dynamics as the difference between two coupled Hawkes processes representing respectively up and down discrete price variations. The authors emphasized that such model allows one to account for the main stylized facts characterizing the observed noise microstructure, namely the signature plot and the Epps effect (see end of Section \ref{model2} for a short remark on these effects).

We use \textbf{level 1} data (i.e, trades and best limit data) provided by QuantHouse Trading Solutions\footnote{http://www.quanthouse.com/} of  the 10-years Euro-Bund (Bund) and on the futures contracts on the Dax index. In this paper, we use $75$ days covering the period between 2009-06-01 and 2009-09-15. The most liquid maturity is always chosen. In order to minimize seasonal effect, every day, only the data between $9$ AM and $11$ AM (GMT) are kept during which the rate of incoming orders can be considered as stationary, moreover, we ignored the days with too little trades.
The data has millisecond accuracy and it has been treated in such a way that each market order is equivalent to exactly one trade
\footnote{When one market order hits several limit orders it results in several trades being reported.}.

\subsection{Estimation in the case of the one-dimensional model for Bund data}\label{sec:Applica-high-frequen-market-data:ssec:n-1}
\label{numresult1}
In the following we test our method on the process of incoming trade times (market orders) for the Bund Future. This is a 1-dimensional point process, consequently we use the estimator described in Section \ref{result1}.
Every single day, we compute an estimation of the function $v_{n\Delta}^{(h)}$ with $h=\Delta=0.1$ and $n\Delta<100$. These quantities are then averaged on all the 75 days and we perform the $\phi$ estimation using this averaged quantity.

The results are shown in the figure \ref{fig:Hilbert-Estimation-FBund-Tmax-100-dela_step-0.1}.
The so-obtained estimation of $\phi$ clearly displays  a power law decay (Fig. \ref{fig:Hilbert-Estimation-FBund-Tmax-100-dela_step-0.1}(b)).
A fit with the function $\alpha t^{\beta}$ gives  $\alpha \simeq 0.1$ , and $ \beta \simeq  -1$\footnote{Strictly speaking a decay kernel of the form $\alpha t^{-1}$ is not admissible for our model. Indeed, its integral diverges both at $t=0$ and at infinity. So it is admissible as long as we consider that this behavior has two cut-off, for small and large $t$. This hypothesis will be implicitly made in the following.}.

Let us point out that we studied the stability of this estimation as the value for $\Delta$ is changed. Table \ref{tab:FitResultsFGBLOneDim}, clearly shows that, as $\Delta$ decreases,  there is a pretty large discrepancy on the estimated values of both both $\alpha$ and $\beta$.
Though these values seem to stabilize when $\Delta$ reaches 0.1 which could indicate that this choice is not far from the optimal value $\Delta^*$ (see Section \ref{erroranalysis})
\begin{figure}[h]
\centering
\hspace{\myspace}
\hspace{\myspaceside}
\resizebox{\myfactside\textwidth}{!}{
       \includegraphics{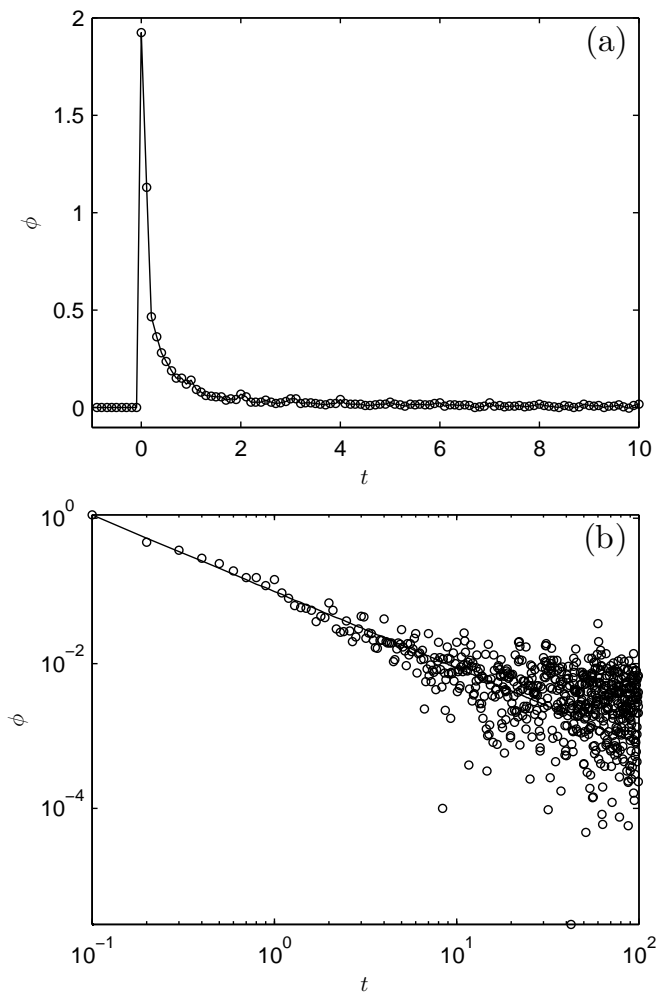}
}
        \caption{1-dimensional non parametric estimation of $\phi$ for the point process of incoming market orders of the Bund futures.
        We used $\Delta = h = 0.1$ and $\lag_{\max}=100$. (a) The estimation clearly displays a slow decay. (b) Log-Log plot the figure (a) above reveals that, in good approximation, the kernel can be considered as decaying as $t^{-\beta}$. A power-law least-square fit
        (solid line) provides the exponent $\beta = -1.05$.}
        \label{fig:Hilbert-Estimation-FBund-Tmax-100-dela_step-0.1}
\end{figure}

\begin{table}[h]
\centering
\begin{tabular}{ccc}
\toprule
$\Delta$ & $\alpha$ & $\beta$\\ [0.5ex]
\otoprule
1 & 0.146276 & -1.41515 \\
0.5 & 0.117503 & -1.30292 \\
0.1 & 0.098624 & -1.05329 \\
0.05 & 0.092975 & -1.03596 \\
0.01 & 0.089227 & -0.99899 \\\bottomrule
\end{tabular}
\caption{Results of the Power Law fit $\alpha t^\beta$, applied to the estimated Hawkes kernel for the rate of incoming market orders of the Bund Futures. We show the results for different values of the parameter $\Delta$.
As $\Delta$ decreases,  the estimations for both $\alpha$ and $\beta$ stabilize and seem to indicate that the choice of $\Delta = 0.1$ is not far from the optimal value $\Delta^*$.
}\label{tab:FitResultsFGBLOneDim}

\end{table}

\subsection{Estimation in the case of the two-dimensional model}\label{sec:Applica-high-frequen-market-data:ssec:n-2}

\subsubsection{The two dimensional model}
\label{model2}
In this Section, we apply our estimation framework to
the process that was initially introduced in \cite{bacry_modeling_2011} for modeling the changes in the mid-price of a given asset.
Let $X_t$ be the mid-price, we decompose it as the sum of the cumulative positive jumps $N^+_t$ and the cumulative negative jumps\footnote{strictly speaking, the jumps have not always the same amplitude. Though for the Bund, most of them are 1-tick large it is not the case for the DAX data. If it is not the case $N^+_t$ (resp. $N^-_t$) just represent the point process (with jumps always equal to 1) with the same arrival time as the upward (resp. downward) jumps of $X_t$} $N^-_t$
\begin{equation*}
    X_t=N^+_t - N^-_t.
\end{equation*}
We look at the two dimensional point process
\begin{equation*}
    N_t=\left(
           \begin{array}{c}
             N^+_t \\
             N^-_t \\
           \end{array}
         \right)
\end{equation*}
As advocated in Ref. \cite{bacry_modeling_2011}, 2 dimensional Hawkes processes are suited to model the so-defined $N_t$.
A simple bisymmetric exponential kernel  (with a null diagonal term) is able to reproduce remarkably the so-called {\em signature plot}.
Moreover, in the case of 2 assets, each of them modeled by a 2-dimensional Hawkes and their interaction modeled by a simple symmetric cross term (leading to a 4-dimensional Hawkes process), these models have also been able to reproduce the so-called {\em Epps effect}.

The Signature plot and the Epps effect are two of the main stylized facts of financial time-series at a microstructure level.
It is not the goal of this paper to go into more details about them, however, it is interesting to point out that the quantity $v_\lag^{(h)}$ on which our  estimation procedure is based includes both the Signature plots (basically corresponding to the diagonal terms of the
$h \rightarrow v_0^{(h)}$ matrix function) and the Epps effect (basically corresponding to the non diagonal terms of the same function).

\subsubsection{Validating hypothesis \textbf{H3}}
In order to apply our estimation framework in dimension 2, we first need to check the hypothesis \textbf{H3}. This hypothesis is two-folded.

\begin{itemize}
\item  First of all, on Fig. \ref{fig:Lambdas-FGBL-Tmax-20-dela_step-0.1-dim-2}, we show for each single day an estimation of $\Lambda^1=\Lambda^+$ versus
 $\Lambda^2=\Lambda^-$. The plot shows that $\Lambda_1 \simeq \Lambda_2$ with a very small variation and therefore that we are within our assumption that $\Sigma=\meanlambda\mathbb{I}$ with $\meanlambda = \Lambda^1 = \Lambda^2$.
\item Secondly, we need to show that the kernel matrix diagonalizes in a basis that is constant. Actually, Fig. \ref{fig:SACF-FGBL-Tmax-20-dela_step-0.1-dim-2} shows that $v_{\tau}^{(h)}$ (and consequently the kernel) is, in good approximation bisymmetric which implies that it diagonalizes on a constant basis.
Indeed we see that $v_{\tau}^{(h),11}$ ($\circ$, in Fig. \ref{fig:SACF-FGBL-Tmax-20-dela_step-0.1-dim-2}(a)) is very close to $v_{\tau}^{(h),22}$ ($\triangle$, in Fig. \ref{fig:SACF-FGBL-Tmax-20-dela_step-0.1-dim-2}(a))
 for all $\lag>0$. In the same way, we see that $v_{\tau}^{(h),12}$ ($\circ$, in Fig. \ref{fig:SACF-FGBL-Tmax-20-dela_step-0.1-dim-2}(b)) is also (up to some estimation noise) equal to $v_{\tau}^{(h),21}$ ($\triangle$, in Fig. \ref{fig:SACF-FGBL-Tmax-20-dela_step-0.1-dim-2}(b))
 for all $\tau>0$ (same results are obtained for $\tau<0$).
 \end{itemize}

Let us point out that the choice of the midpoint price series is motivated by two factors. First, if we choose the series of traded prices, we have, because of a non-zero spread value, an important ``spurious" bouncing effect (oscillation between best bid and best ask prices) that is very hard to capture by our modeling approach (see \cite{bacry_modeling_2011}) and we get negative decay functions on the diagonal of $\phi$: $\phi^{11}$ and $\phi^{22}$ (see Section \ref{estim2} for a longer discussion on the influence of the bouncing effect on the kernel estimation). Second, if we choose the series of the last traded prices of buy orders only as in \cite{bacry_modeling_2011}, the bouncing artifact disappears, however the symmetry of $\Lambda^+$ and $\Lambda^-$ is naturally  no longer verified. The series of midpoint prices has the advantage of having a reduced bouncing effect and of presenting identically distributed $N^+$ and $N^-$ processes\footnote{
Still, when actually going through the estimation process, for stability reasons, we chose $\meanlambda$ to be the average of the estimated $\Lambda^+$ and $\Lambda^-$ and, similarly, $v_{\tau}^{(h),11}$ and $v_{\tau}^{(h),22}$ have been averaged as well as
$v_{\tau}^{(h),12}$ and $v_{\tau}^{(h),21}$
}.

\begin{figure}[h]
    \centering
\hspace{\myspace}
\resizebox{\myfactside\textwidth}{!}{
 \includegraphics{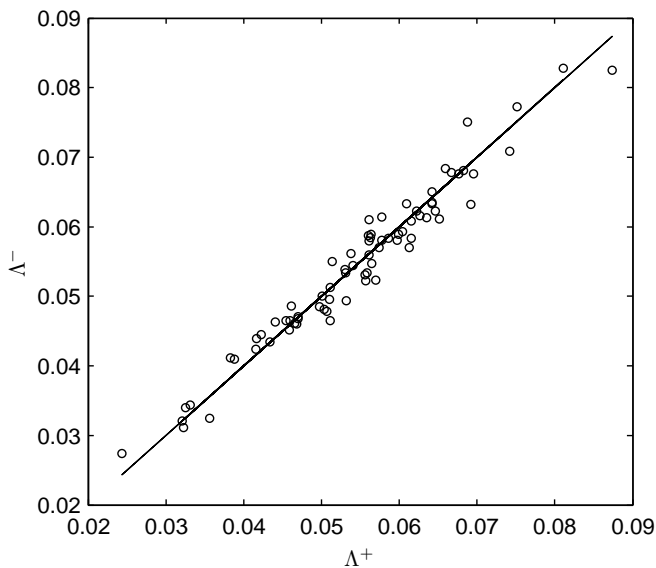}
 }
        \caption{Estimated $\Lambda^1=\Lambda^+$ versus $\Lambda^2=\Lambda^-$
        for every single day (75 total) in the 2-dimensional model for mid-price of the Bund futures. Each dot represents a single day. The solid line corresponds to $\Lambda^+=\Lambda^-$. We see that, in good approximation
        $\Lambda^+\simeq \Lambda^-$ validating the first part of hypothesis \textbf{H3}.}
        \label{fig:Lambdas-FGBL-Tmax-20-dela_step-0.1-dim-2}
\end{figure}

\begin{figure}[h]
    \centering
\hspace{\myspaceside}
\resizebox{\myfactside\textwidth}{!}{
        \includegraphics{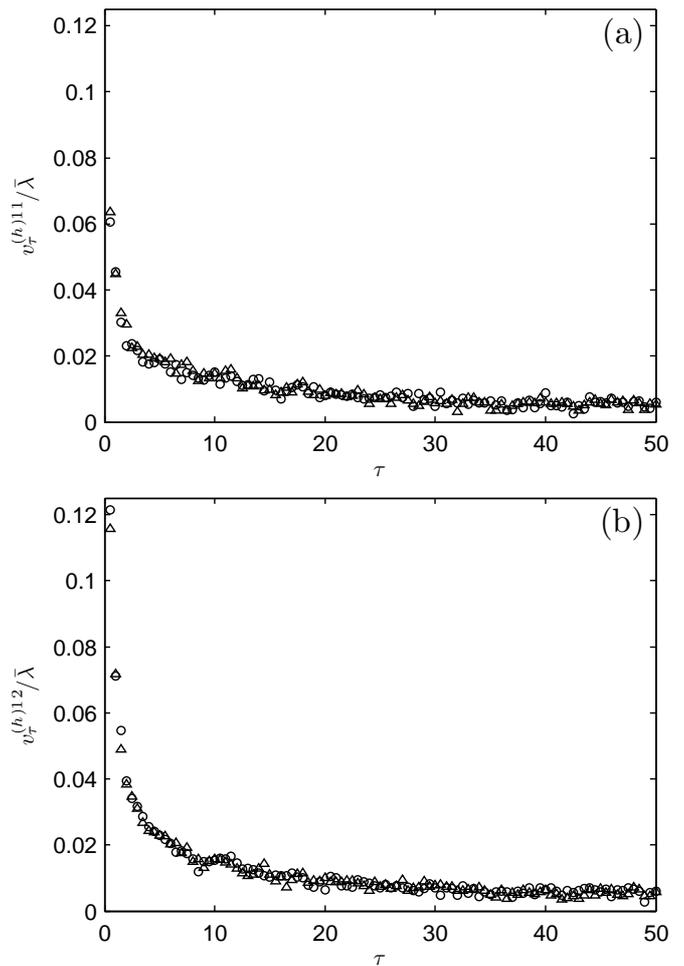}
        }
        \caption{Estimation of $v_{\lag}^{(h)}$ for  $\lag >0$ on 75 days (from 9am to 11am) in the framework of the 2-dimensional model for mid-price of the Bund futures. We see that the matrix $v_{\lag}^{(h)}$ is, in good approximation, bisymmetric. Same results would be obtained for $\lag<0$. That completes (with Fig. \ref{fig:Lambdas-FGBL-Tmax-20-dela_step-0.1-dim-2}) the validation of hypothesis \textbf{H3}.
        (a)  $v_{\tau}^{(h),11}$ ($\circ$)  and  $v_{\tau}^{(h),22}$ ($\triangle$).
(b)  $v_{\tau}^{(h),12}$ ($\circ$)  and  $v_{\tau}^{(h),21}$ ($\triangle$).
}\label{fig:SACF-FGBL-Tmax-20-dela_step-0.1-dim-2}
\end{figure}

\subsubsection{Kernel estimation}
\label{estim2}
We use the framework developed in Section \ref{result2} in order to estimate $\phi$.

The result of the non parametric estimation is shown in figure \ref{fig:Hilbert-fi11Andfi12-Estimation-FGBL-Tmax-20-dela_step-0.1-dim-2}. We immediately notice that the diagonal term $\phi^{(d)} = \phi^{11}$ is an order of magnitude smaller than the non diagonal one $\phi^{(a)} =\phi^{12}$ and can be considered as being zero. This means that $N^+$ and $N^-$ are not self exciting and exclusively mutually exciting.
The log-log plot in Fig.  \ref{fig:Hilbert-fi12-Estimation-FGBL-Tmax-20-dela_step-0.1-dim-2} of the anti-diagonal term $\phi^{(a)} = \phi^{12}$ displays a power-law behavior $\alpha t^\beta$ with
$\alpha \simeq  0.095$ and $\beta \simeq  -0.99$ which is unsurprisingly close to the values we found for the 1-dimensional model in
Section \ref{numresult1}.

\begin{figure}[h]
    \centering
\hspace{\myspaceside}
\hspace{\myspace}
\resizebox{\myfactside\textwidth}{!}{
    \includegraphics{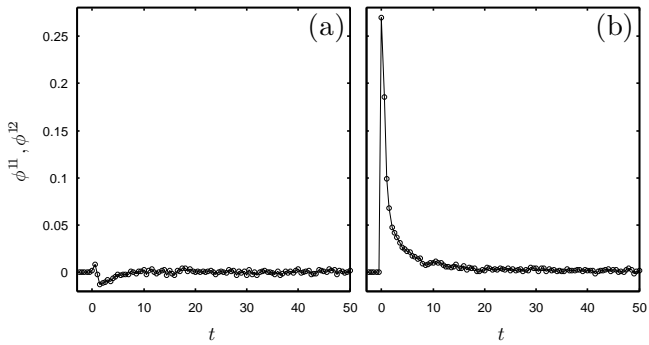}
}
\caption{Non parametric estimation of the bisymmetric kernel $\phi$ for the mid-price of the Bund futures using 75 days (between 9am to 11am). We used $\Delta = h = 0.5$ and $\lag_{\max}=200$. The diagonal term $\phi^{(d)}=\phi^{11}$ appears to be negligible compared to the anti-diagonal term $\phi^{(a)}=\phi^{12}$ confirming that the $N_+$ and $N_-$ are mutually but not self exciting (a) Estimation of the diagonal term $\phi_{t}^{(d)}$. (b) Estimation of the anti-diagonal term $\phi_{t}^{(a)}$.}\label{fig:Hilbert-fi11Andfi12-Estimation-FGBL-Tmax-20-dela_step-0.1-dim-2}
\end{figure}

\begin{figure}[h]
    \centering
\hspace{\myspaceside}
\resizebox{\myfactside\textwidth}{!}{
        \includegraphics{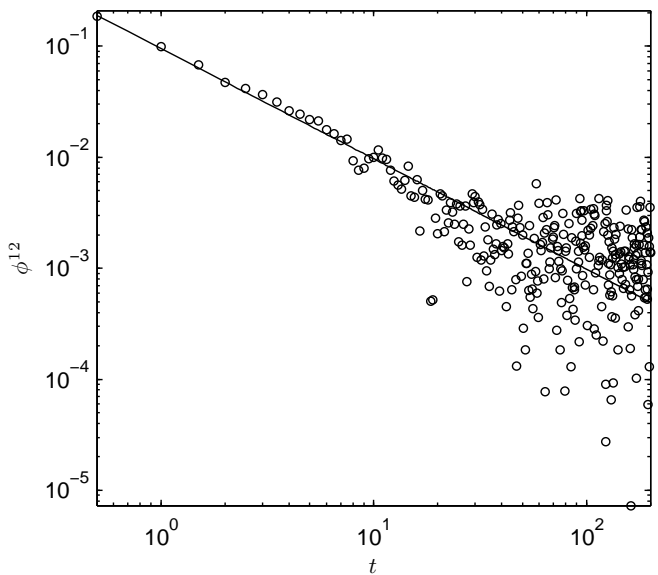}
        }
        \caption{Log-Log plot of the anti-diagonal kernel $\phi^{(a)} = \phi^{12}$ (estimated for the Bund futures ) shown on Fig. \ref{fig:Hilbert-fi11Andfi12-Estimation-FGBL-Tmax-20-dela_step-0.1-dim-2}(b).
        It shows that, in good approximation, this term can be considered as decaying as $t^{-\beta}$
        A power law fit (solid line) gives the exponent $\beta \simeq -1$.  }\label{fig:Hilbert-fi12-Estimation-FGBL-Tmax-20-dela_step-0.1-dim-2}

\end{figure}

Figure \ref{fig:Hilbert-mid-fi11Andfi12-Estimation-FDax-Tmax-200-dela_step-0.5-dim-2} show the estimation of the bisymmetric kernel on Dax Futures time-series. Here, the diagonal kernel is of the same order (it can even be greater) than the anti-diagonal term. This property, that was not observed for the Bund, can be interpreted using tick size considerations. Indeed, the tick size of Dax Futures is well known to be very small in the sense that the agents trading on this asset only care for moves that are greater than 1 tick (or equivalently for successive moves of 1 tick in the same direction). This translates into the fact that, at the scale of 1-tick, there is hardly no bouncing effect (negative autocorrelation of the returns). On the contrary, on the Bund, the tick size is well known to be "too big", i.e., when the price goes up by 1-tick, most of the time the next move is a downward move, i.e., the bouncing effect is very strong. In the framework of our model, it is clear that the strongest the bouncing effect the smaller the anti-diagonal term. Actually, looking carefully at the Fig. \ref{fig:Hilbert-fi11Andfi12-Estimation-FGBL-Tmax-20-dela_step-0.1-dim-2}(a), the bouncing effect on the Bund is so strong that, though much smaller than the anti-diagonal kernel, the estimated  diagonal kernel seems to become significantly negative (compared to the estimation noise)  which  defies  the assumption of our model. Indeed, if $\phi^{11}$ becomes negative, there is a priori, in theory,  no guarantee that $\lambda_t$ remains positive, and a negative rate of arrival is unacceptable. Of course, in practice, since these negative values are much smaller than the anti-diagonal kernel, the probabilities for $\lambda_t$ to become negative are clearly negligible.
The bouncing effect is most apparent on the series of trade prices (bouncing between best ask and best bid prices) and while it is heavily dampened when we use the series of midpoint prices,  we see that it is still too strong to be fully captured by our model (see the discussions in \cite{bacry_modeling_2011} and in \cite{dayri_2011}).

\begin{figure}[h]
\centering
\hspace{\myspaceside}
\hspace{\myspace}
\resizebox{\myfactside\textwidth}{!}{
        \includegraphics{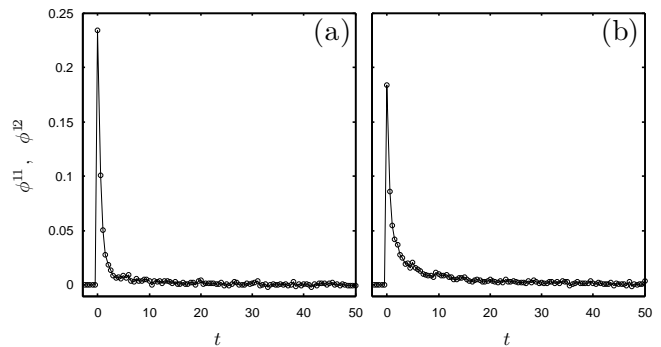}
        }
        \caption{Non parametric estimation of the bisymmetric kernel $\phi$ for the mid-price of the Dax futures using 75 days (between 9am to 11am). We used $\Delta = h = 0.5$ and $\lag_{\max}=200$. Contrarily to the Bund kernel (Fig. \ref{fig:Hilbert-fi11Andfi12-Estimation-FGBL-Tmax-20-dela_step-0.1-dim-2}), the diagonal term $\phi^{(d)}=\phi^{11}$ appears to be of the same order as the anti-diagonal term $\phi^{(a)}=\phi^{12}$. This is due to the tick size of the Dax which is much smaller than the tick size of the Bund. (a) Estimation of the diagonal term $\phi_{t}^{(d)}$ $\beta \simeq -1.2$ . (b) Estimation of the anti-diagonal term $\phi_{t}^{(a)}$ $\beta \simeq -0.8$.}\label{fig:Hilbert-mid-fi11Andfi12-Estimation-FDax-Tmax-200-dela_step-0.5-dim-2}
\end{figure}

\section{Conclusion and prospects}
In this paper we have introduced a non parametric method to estimate the shape of the self-exciting
kernels for symmetric Hawkes processes. Our method can be implemented very easily since it relies on the computation
of the empirical covariance matrix and mainly uses Fourier transforms.
As illustrated on specific numerical examples (1D and 2D),
it provides reliable results for series of about $10^5$ events long.
This method can be very helpful to get a precise idea
of the kernel functional shape before proceeding to a classical parametric (e.g. maximum likelihood) estimation. In a future work,
we will consider a natural extension of the method in order to
account for random marks associated with each events as e.g. in the ETAS
model for earthquakes \cite{ogata_seismicity_1999,sornette}.

As far as financial time series are concerned, we have shown that,
for both Bund and Dax Futures high frequency data, the Hawkes kernels are slowly (power law) decaying.
Even if this observation has to be confirmed by further studies,
it is noteworthy that the values of the kernel power-law exponent we found ($\beta \simeq -1$) corresponds
to the onset of stationarity.
Our findings can be of great interest since they allow one to finely
describe, at high frequency, the market activity as
a long-memory self-exciting process. This can notably
allow one to improve the results of Ref. \cite{bacry_modeling_2011}
where the authors restrict themselves to exponential kernels to reproduce
noise microstructure main features using Hawkes processes.
On a more general ground, one can hope that our findings will be helpful
to bridge the gap between a theory of price variations at the microstructure
level and the standard coarse scale models with long range correlated volatility.

\bibliographystyle{plain} %style
\bibliography{BDM_EPJB}

\begin{thebibliography}{10}

\bibitem{adamopoulos_cluster_1976}
L.~Adamopoulos.
\newblock Cluster models for earthquakes: Regional comparisons.
\newblock {\em Journal of the International Association for Mathematical
  Geology}, 8(4):463--475, August 1976.

\bibitem{bacry_modeling_2011}
E.~Bacry, S.~Delattre, M.~Hoffmann, and J.~F. Muzy.
\newblock {Modeling microstructure noise with mutually exciting point
  processes}.
\newblock {\em Accepted for publication in Quantitative Finance}, 2011.

\bibitem{bacry_scaling_2011}
E.~Bacry, S.~Delattre, M.~Hoffmann, and J.~F Muzy.
\newblock Scaling limits for hawkes processes and financial data modelling.
\newblock {\em preprint}, 2011.

\bibitem{bartlett_spectral_1963}
M.~S. Bartlett.
\newblock The spectral analysis of point processes.
\newblock {\em Journal of the Royal Statistical Society. Series B
  {(Methodological)}}, 25(2):264--296, January 1963.
\newblock {ArticleType:} research-article / Full publication date: 1963 /
  Copyright © 1963 Royal Statistical Society.

\bibitem{bartlett_spectral_1964}
M.~S. Bartlett.
\newblock The spectral analysis of {Two-Dimensional} point processes.
\newblock {\em Biometrika}, 51(3/4):299--311, December 1964.
\newblock {ArticleType:} research-article / Full publication date: Dec., 1964 /
  Copyright © 1964 Biometrika Trust.

\bibitem{mikosch_modelling_2009}
L.~Bauwens and N.~Hautsch.
\newblock Modelling financial high frequency data using point processes.
\newblock In T.~Mikosch, {J-P.} Kreiss, R.~A. Davis, and T.~G. Andersen,
  editors, {\em Handbook of Financial Time Series}, pages 953--979. Springer
  Berlin Heidelberg, 2009.

\bibitem{bowsher_modelling_2007}
C.~G. Bowsher.
\newblock Modelling security market events in continuous time: Intensity based,
  multivariate point process models.
\newblock {\em Journal of Econometrics}, 141(2):876--912, December 2007.

\bibitem{chavez-demoulin_estimating_2005}
V.~{Chavez-Demoulin}, A.~C. Davison, and A.~J. {McNeil}.
\newblock Estimating value-at-risk: a point process approach.
\newblock {\em Quantitative Finance}, 5(2):227--234, April 2005.

\bibitem{dayri_2011}
K.~A. Dayri.
\newblock {\em Market Microstructure and Modeling of the Trading Flow}.
\newblock PhD thesis, {\'E}cole Polytechnique, December 2011.

\bibitem{e._lewis_self-exciting_2011}
P.~J.~Brantingham E.~Lewis, G.~Mohler and A.~Bertozzi.
\newblock Self-exciting point process models of the insurgency in iraq.
\newblock 2011.

\bibitem{embrechts_multivariate_2011}
P.~Embrechts, J.~T. Liniger, and L.~Lu.
\newblock Multivariate hawkes processes: an application to financial data.
\newblock {\em Journal of Applied Probability}, 48:367--378, August 2011.

\bibitem{giesecke_top_2007}
K.~Giesecke and L.~R. Goldberg.
\newblock A top down approach to {Multi-Name} credit.
\newblock {\em {SSRN} {eLibrary}}, 2007.

\bibitem{hawkes_point_1971}
A.~Hawkes.
\newblock Point spectra of some mutually exciting point processes.
\newblock {\em Biometrika}, 58:83--90, April 1971.

\bibitem{hawkes_spectra_1971}
A.~Hawkes.
\newblock Spectra of some {Self-Exciting} and mutually exciting point
  processes.
\newblock {\em Journal of the Royal Statistical Society. Series B
  {(Methodological)}}, 33-3:438--443, April 1971.

\bibitem{sornette}
A.~Helmstetter and D.~Sornette.
\newblock Subcritical and supercritical regimes in epidemic models of
  earthquake aftershocks.
\newblock {\em J. Geophys. Res.}, 107:2237, 2002.

\bibitem{hewlett_clustering_2006}
P.~Hewlett.
\newblock Clustering of order arrivals, price impact and trade path
  optimisation.
\newblock In {\em Workshop on Financial Modeling with Jump processes}, 2006.

\bibitem{large_measuring_2007}
J.~Large.
\newblock Measuring the resiliency of an electronic limit order book.
\newblock {\em Journal of Financial Markets}, 10(1):1--25, 2007.

\bibitem{lewis_nonparametric_2011}
E.~Lewis and G.~Mohler.
\newblock A nonparametric {EM} algorithm for a multiscale hawkes process.
\newblock Joint Satistical Meetings 2011, 2011.

\bibitem{liniger_multivariate_2009}
J.~T. Liniger.
\newblock {\em Multivariate Hawkes Processes}.
\newblock PhD thesis, {ETH} Zurich, 2009.

\bibitem{marsan_extending_2008}
D.~Marsan and O.~Lengliné.
\newblock Extending earthquakes' reach through cascading.
\newblock {\em Science}, 319(5866):1076 --1079, February 2008.

\bibitem{mohler_self-exciting_2011}
G.~Mohler, M.~Short, P.~Brantingham, F.~Schoenberg, and G.~Tita.
\newblock {Self-Exciting} point process modeling of crime.
\newblock {\em Journal of the American Statistical Association},
  106(493):100--108, March 2011.

\bibitem{ogata_lewis_1981}
Y.~Ogata.
\newblock On lewis' simulation method for point processes.
\newblock {\em Ieee Transactions On Information Theory}, 27:23--31, January
  1981.

\bibitem{ogata_seismicity_1999}
Y.~Ogata.
\newblock Seismicity analysis through point-process modeling: A review.
\newblock {\em Pure and Applied Geophysics}, 155(2-4):471--507, August 1999.

\bibitem{ogata_linear_1982}
Y.~Ogata and H.~Akaike.
\newblock On linear intensity models for mixed doubly stochastic poisson and
  self- exciting point processes.
\newblock {\em Journal of the Royal Statistical Society. Series B
  {(Methodological)}}, 44(1):102--107, January 1982.
\newblock {ArticleType:} research-article / Full publication date: 1982 /
  Copyright © 1982 Royal Statistical Society.

\bibitem{oppenheim_discrete-time_1999}
A.~V. Oppenheim, R.~W. Schafer, and J.~R. Buck.
\newblock {\em Discrete-time signal processing}.
\newblock {Prentice-Hall} signal processing series. Prentice Hall, 1999.

\bibitem{ozaki_maximum_1979}
T.~Ozaki.
\newblock Maximum likelihood estimation of hawkes' self-exciting point
  processes.
\newblock {\em Annals of the Institute of Statistical Mathematics},
  31(1):145--155, December 1979.

\bibitem{paley_fourier_1934}
R.E.A.C. Paley and N.~Wiener.
\newblock {\em Fourier transforms in the complex domain}.
\newblock Colloquium Publications - American Mathematical Society. American
  Mathematical Society, 1934.

\bibitem{reynaud-bouret_adaptive_2010}
P.~Reynaud-Bouret and S.~Schbath.
\newblock Adaptive estimation for hawkes processes; application to genome
  analysis.
\newblock {\em The Annals of Statistics}, 38(5):2781--2822, October 2010.
\newblock Zentralblatt {MATH} identifier: 1200.62135; Mathematical Reviews
  number {(MathSciNet):} {MR2722456}.

\bibitem{Ioane2011Hawkes}
I.~M. Toke.
\newblock Econophysics of order-driven markets.
\newblock chapter "Market making" behaviour in an order book model and its
  impact on the bid-ask spread. Springer Verlag, 2011.

\bibitem{vere-jones_stochastic_1970}
D.~{Vere-Jones}.
\newblock Stochastic models for earthquake occurrence.
\newblock {\em Journal of the Royal Statistical Society. Series B
  {(Methodological)}}, 32(1):1--62, January 1970.
\newblock {ArticleType:} research-article / Full publication date: 1970 /
  Copyright © 1970 Royal Statistical Society.

\bibitem{vere-jones_examples_1982}
D.~{Vere-Jones} and T.~Ozaki.
\newblock Some examples of statistical estimation applied to earthquake data.
\newblock {\em Annals of the Institute of Statistical Mathematics},
  34(1):189--207, December 1982.

\bibitem{zhuang_stochastic_2002}
J.~Zhuang, Y.~Ogata, and D.~{Vere-Jones}.
\newblock Stochastic declustering of {Space-Time} earthquake occurrences.
\newblock {\em Journal of the American Statistical Association}, 97(458):pp.
  369--380, 2002.

\end{thebibliography}

\end{document}